\documentclass[a4paper, reqno, 11pt]{article}
\usepackage[utf8]{inputenc}
\usepackage{amsmath}
\usepackage{amsthm}
\usepackage{amssymb}
\usepackage{todonotes}
\usepackage{geometry}
\usepackage{hyperref}
\usepackage{cleveref}
\usepackage[noadjust]{cite}
\newtheorem{theorem}{Theorem}

\newtheorem{lemma}{Lemma}

\newtheorem{proposition}{Proposition}

\geometry{
    a4paper,
    total={170mm,257mm},
    left=20mm,
    top=20mm,
}

\newcommand{\id}{\mathbb{I}}
\newcommand{\R}{\mathbb{R}}

\DeclareMathOperator{\Tr}{Tr}

\begin{document}
\title{Kerr metric from two commuting complex structures}
\author{Kirill Krasnov\footnote{Email: kirill.krasnov@nottingham.ac.uk - Corresponding author}  \, and Adam Shaw\footnote{Email: ppyas15@exmail.nottingham.ac.uk} \\ {}\\
{\it School of Mathematical Sciences, University of Nottingham, NG7 2RD, UK}}

\date{August 2024}
\maketitle

\begin{abstract}    The main aim of this paper is to simplify and popularise the construction from the 2013 paper by Apostolov, Calderbank, and Gauduchon, which (among other things) derives the Pleba\'nski-Demia\'nski family of solutions of GR using ideas of complex geometry. The starting point of this construction is the observation that the Euclidean versions of these metrics should have two different commuting complex structures, as well as two commuting Killing vector fields. After some linear algebra, this leads to an ansatz for the metrics, which is half-way to their complete determination. Kerr metric is a special 2-parameter subfamily in this class, which makes these considerations directly relevant to Kerr as well. This results in a derivation of the Kerr metric that is self-contained and elementary, in the sense of being mostly an exercise in linear algebra. 
\end{abstract}

\section{Introduction}

Kerr metric \cite{GravitationalFKerr1963} is a 2-parameter stationary axisymmetric solution to the vacuum Einstein equations, describing a rotating black hole. Its original derivation was by brute force, assuming algebraic speciality and using the null tetrad formalism.\footnote{See e.g. \cite{adamo2016kerrnewmanmetricreview} for a description of the very interesting story behind the Kerr's original derivation.} Standard textbooks on General Relativity just state the metric and proceed to study its properties. Even a check that the Kerr metric satisfies Einstein equations is a rather non-trivial exercise, even thought doable (by hand) with the right formalism, see below. Books that do derive the Kerr metric are rare and few, and the derivation that can be found in \cite{TheMathematicaChandr1998} makes it clear why. The derivation is just too complicated. One can give an "elementary" derivation of the Kerr metric using the the Newman-Janis shift \cite{NoteOnTheKerNewman1965}, see the recent review \cite{adamo2016kerrnewmanmetricreview} for this point of view, but this derivation has a flavour of a magic trick. Another "elementary" derivation is based on the Kerr-Schild form of this metric, see e.g. Chapter 8 of \cite{Deruelle:2018ltn}, or similar in spirit \cite{APossibleIntuNikoli2012}, but there remains some mystery as to why this method should be expected to work. A good further discussion on the issues surrounding the question of simplifying the derivation of the Kerr metric can be found in \cite{PhysicallyMotiBaines2022}.

The Kerr family of solutions is a member of a more general Pleba\'nski-Demia\'nski family \cite{RotatingChargPleban1976}. All metrics of this family can be analytically continued to produce Euclidean signature metrics in 4D. Being an analytic continuation of a Lorentzian algebraically special metric of type D, the resulting Euclidean metrics are type D with respect to both halves of their Weyl curvature. A beautiful result of \cite{SelfDualKahleDerdzi1983} states that a 4D Ricci-flat metric that has the property that one of its Weyl curvature halves is of type D is conformal to a Kahler metric.\footnote{The theorem in \cite{SelfDualKahleDerdzi1983} is more general, and is to be reviewed in the main text.} In particular, there is an integrable almost complex structure. This theorem can be applied with respect to both halves of the Weyl curvature, and tells us that the analytic continuation of any Lorentzian metric that solves the vacuum Einstein equations and is type D is a Riemannian 4D metric that has two distinct integrable almost complex structures. These are of two different orientations, and complex structures in 4D that are of different orientations commute. This means that the Euclidean Kerr metric is expected to have two different commuting complex structures. By assumption, it also has two commuting Killing vector fields. This is a very rigid geometric setup, and it turns out that to a very large extent the Ricci-flat Riemannian metrics of this type can be described completely. This is the viewpoint taken by the work \cite{AmbitoricGeomeAposto2013} that re-derived the Pleba\'nski-Demia\'nski family of metrics as examples of "ambi-toric" 4D geometries, i.e. metrics that are toric with respect to two different complex structures. We remind the reader that a symplectic toric manifold of dimension $2n$ is a symplectic manifold endowed with a hamiltonian action of the torus $(S^1)^n$. 

The main aim of the present paper is to popularise the results of \cite{AmbitoricGeomeAposto2013} to the gravitational physics community, and in particular re-derive the (Euclidean) Kerr metric in a rather elementary fashion. The starting point of the analysis is the (motivated above) assumption of two different commuting complex structures, together with two commuting Killing vector fields. One can then introduce a coordinate system adapted to this rigid geometric setup, as well as the corresponding set of one-forms. This leads to an ansatz for the metric. This ansatz is half-way to the derivation of the Kerr metric, and the metric results by substituting the ansatz into the vacuum Einstein equations, in more or less elementary fashion. We believe that the derivation we present is sufficiently simple that it could be included into a GR textbook. Our aim is also to streamline some of the arguments in \cite{AmbitoricGeomeAposto2013}. This paper is heavy in algebraic geometry, and also its aims are very different from the aim of simplifying the derivation of Kerr. Our aim here is to simplify the arguments in \cite{AmbitoricGeomeAposto2013} as much as possible and derive everything using as elementary manipulations as possible. 

Our other aim is to put to use the Plebanski formalism \cite{Plebanski:1977zz} for 4D GR. In particular, we will see that Derdzi\'nski's theorem (quoted above) is an immediate consequence of the Plebanski formalism. This makes this formalism well-adapted to the problem at hand. So, this article can be read as a rephrasal of the derivation in \cite{AmbitoricGeomeAposto2013} using the Plebanski formalism. 

Finally, our main motivation for undertaking this study was to search for a formalism that would be best adapted to the problem of studying perturbations (in particular gravitational perturbations) of the Kerr metric. We believe the geometry we describe here will prove to be useful for this purpose. More remarks on this are contained in the discussion section. 

This article is organised as follows. We begin, in Section \ref{sec:lit}, with a review of works leading to \cite{AmbitoricGeomeAposto2013}.
In \cref{sec:Chiral_Formalism} we describe the Pleba\'nski formalism. In particular, here we obtain a characterisation of Killing vector fields that is best adapted to this formalism. The material in this section is new. In \cref{sec:ConformalToKahler} we provide a new, Plebanski formalism based proof of the result by Derdzi\'nski that type D metrics are conformal to K\"ahler. We also prove another result by Derdzi\'nski, namely that type D metrics possess a Killing vector field. We review the Kerr solution in \cref{sec:Kerr}, and show by an explicit computation how the two commuting complex structures arise. This section motivates our geometric assumptions about the two Killing vector fields that we need to proceed with the derivation of the metric. In \cref{sec:DerivationOfTypeDSpacetimes} we, more or less, follow \cite{AmbitoricGeomeAposto2013} and derive the Pleba\'nski-Demia\'nski family of metrics. The key point of this derivation is that there exists a natural frame consisting of vector fields that all commute with each other. The dual 1-forms are then all closed. This leads to a natural set of geometric coordinates. In terms of these coordinates, the two complex structures can be characterised explicitly, in terms of just two arbitrary functions, each of one variable. This leads to a remarkably simple ansatz for the metric. To impose Einstein equations, we use the chiral Plebanski formalism. After some analysis, Einstein equations are seen to reduce all unknowns to two quartic polynomials, each of one variable.  
  
  \section{Literature Review}
\label{sec:lit}

Since the discovery of the Kerr metric \cite{GravitationalFKerr1963} in 1963 much effort has been put towards understanding the geometry and properties that it carries.
The Kerr metric is a 2 parameter family of type D solutions to Einstein's equations.
In 1969 Kinnersley parameterised all Lorentzian type D solutions \cite{TypeDVacuumMKinner1969}.
This was not given in a compact form and many different metrics were presented.
Later in 1976 Pleba\'nski and Demia\'nski found a single expression for the most general black hole solution \cite{RotatingChargPleban1976,ANewLookAtTGriffi2005}.
The Pleba\'nski-Demia\'nski (PD) family of metrics is a 7 parameter family of solutions to the Einstein-Maxwell equations.
Removing from the PD metric the electric and magnetic charge parameters, as well as the cosmological constant, we are left with a 4 parameter family of solutions.
All type D spacetimes have Euclidean analogues, obtained by taking an analytic continuation. 
An analytic continuation of the 4-parameter PD subfamily gives a double sided type D Euclidean Ricci-flat metric, which is also asymptotically locally Euclidean (ALE).
More recently, Chen and Teo generalised the Ricci-flat PD metric to an asymptotically locally flat (ALF) metric \cite{AFiveParameteChen2015}, which becomes ALE in the appropriate limit. However, not all of the Chen-Teo metrics have Lorentzian counterparts.
Recently, Biquard and Gauduchon showed \cite{OnToricHermitBiquar2021} that the ALF, Ricci-flat, Hermitian, toric metrics fall into one of the 4 types:
Kerr, Chen-Teo, Taub-NUT and Taub-bolt.

One of the first realisations of the complex properties of type D spacetimes was through the Newman-Janis shift \cite{NoteOnTheKerNewman1965}.
The Newman-Janis shift links the Kerr and Schwarzschild solutions of Einstein's field equations through a complex coordinate transformation.
It was noticed by Flaherty \cite{AnIntegrableSFlaher1974,HermitianAndK1976} that this shift was related to the existence of an integrable almost complex structure that all type D spacetimes are equipped with. Type D spacetimes were thus known to be locally Hermitian.
Later, Derdzi\'nski proved \cite{SelfDualKahleDerdzi1983}, in Euclidean signature, that all one-sided type D metrics (with an extra condition implied by Einstein's equations) are conformal to K\"ahler. Soon after Derdzin\'nski's result, Przanowski and Baka, using this local Hermiticity, reduced the vanishing of the Ricci tensor for a type D metric to a second order PDE for a single function \cite{OneSidedTypePrzano1984}. However, the PDE was written using the complex coordinates making solutions difficult to find.
In 1991 LeBrun derived a related result, with the starting point being a simple form of a K\"ahler metric with a $U(1)$ symmetry \cite{ExplicitSelfDClaude1991}.
The metric is parameterised by 2 functions $u = u(x,y,z)$, $w = w(x,y,z)$ and is given by 
\begin{align}
    g = e^u w(dx^2 + dy^2) + w dz^2 + w^{-1} \alpha^2, \quad \alpha = dt + \theta, \quad \alpha = J(w dz)
\end{align}
where $\theta \in \Lambda^1(\mathbb{R}^3)$ and $J$ is the complex structure. 
The Killing vector is $\partial_t$, which generates the $U(1)$ symmetry.
Vanishing scalar curvature for LeBrun's ansatz becomes the so-called $SU(\infty)$ Toda-Lattice equation
\begin{align}
    u_{xx} + u_{yy} + (e^u)_{zz}=0. \label{eq:SU_Toda_Lattice}
\end{align}
The metric is Ricci flat when $w = c u_z$ for some constant $c$.
This ansatz covers a large class of K\"ahler metrics in dimension 4, however, the Toda equation is difficult to solve in general.
More recently, Tod \cite{OneSidedTypeTodP2020} analysed the case of one-sided type D metrics that have an additional commuting Killing vector field. A more recent related work is \cite{Tod:2024zpa}. In this case the $SU(\infty)$ Toda equation linearises, as shown by Ward in a different context \cite{EinsteinWeylSWard1990}. The solutions can then be obtained from axisymmetric solutions of the flat three-dimensional Laplacian. However, even after the linearisation, deriving a general solution from \cref{eq:SU_Toda_Lattice} remains difficult.
Apostolov, Calderbank and Gauduchon took a different approach in \cite{Aposto2003,AmbitoricGeomeAposto2013}, building upon the fact that many interesting type D metrics are type D with respect to both halves of the Weyl curvature. This is also the approach we follow in the present work.

\section{Chiral Formalism}\label{sec:Chiral_Formalism}

In this section, we will briefly describe a well known and very useful projection of the 4 dimensional Einstein action into one of its chiral halves. This leads to what can be called a chiral formalism for 4D GR. For an introduction to chiral and related formulations of General Relativity see \cite{PlebanskiFormuKrasno2009,FormGenRelGravity2020}.
We will focus on the Euclidean signature metrics, although the formalism generalises to Lorentzian and split signatures.

\subsection{Plebanski action and field equations}

The Hodge star operator on 2-form is an endomorphism, $\star : \Lambda^2 \rightarrow \Lambda^2$, which squares to the identity $\star^2 = \id$.
The Hodge star splits the space of 2-forms in two eigenspaces $\Lambda^2 = \Lambda^+ \oplus \Lambda^-$, with eigenvalues $\pm 1$ respectively.
We call them self-dual (SD) and anti-self-dual 2-forms (ASD) for $+1$ and $-1$ eigenspaces respectively.
Each eigenspace has dimension 3 (as $\dim{\Lambda^2} = 6$), meaning we can prescribe bases $\Sigma^i$ and $\bar{\Sigma}^i$ for $\Lambda^+$ and $\Lambda^-$.
The index $i = 1,2,3$ is an $\mathfrak{so}(3) =\mathfrak{su}(2)$ index, as the Hodge star split is related to the Lie algebra decomposition $\mathfrak{so}(4) = \mathfrak{su}(2) \oplus \mathfrak{su}(2)$. The triple of 2-forms $\Sigma^i$ is called an $SU(2)$ structure, as the choice of such a triple reduces the structure group ${\rm GL}(4,\R)$ to the
 $SU(2)$ preserving each of the 2-forms. 
 
 The Riemann curvature can be viewed as another endomorphism acting on the space of 2-forms. It can then be decomposed with respect to the decomposition of the space of 2-forms. As is well-known, the Riemann tensor takes the following form
\begin{align}
    R_{\mu \nu \rho \sigma} = \begin{pmatrix}
        W^+ + R & \mathcal{R} \\
        \mathcal{R} & W^- + R
\end{pmatrix} \label{eq:Riemann_SD_ASD_Decomposition}
\end{align}
where $W^\pm$ are the SD/ASD halves of the Weyl curvature tensor, $\mathcal{R}$ is the traceless part of the Ricci tensor and $R$ is the remaining trace of Ricci.
The first row of \cref{eq:Riemann_SD_ASD_Decomposition} is the self-dual projection of the Riemann curvature with respect to one of the pairs of its indices. It is now easy to see that the Einstein condition is equivalent to requiring the ASD part of the first row in \cref{eq:Riemann_SD_ASD_Decomposition} to be zero. 
This is useful because one only needs half the of curvature components to enforce the Einstein condition. The action that realises this idea and has 
Einstein metrics at its critical points is the Pleba\'nski action \cite{Plebanski:1977zz} 
\begin{align}
    S = \int \Sigma^i \wedge F^i - \frac{1}{2} \left( \Psi^{ij} + \frac{\Lambda}{3} \delta^{ij} \right) \Sigma^i \wedge \Sigma^j. \label{eq:Plebanski_Action}
\end{align}
Here $F^i = dA^i + \frac{1}{2}\epsilon^{ijk}A^j \wedge A^k$ is the curvature of the $\mathfrak{su}(2)$ connection $A^i \in \Lambda^1\otimes\mathfrak{su}(2)$.
The Lagrange multiplier $\Psi^{ij}$ is required to be symmetric and traceless, $\Lambda$ is the cosmological constant.
Varying with respect to $\Psi^{ij}, A^i$ and $\Sigma^i$ one obtains the following Euler-Lagrange equations
\begin{gather}
    \Sigma^i \wedge \Sigma^j \sim \delta^{ij} \nonumber\\
    d^A \Sigma^i = d\Sigma^i + \epsilon^{ijk} A^j \wedge \Sigma^k = 0 \label{eq:SD_Pleb_Field_Equations}\\
    F^i = \left( \Psi^{ij} + \frac{\Lambda}{3}\delta^{ij} \right) \Sigma^j. \nonumber
\end{gather}
The first of the above equations implies that the 2-forms describe only the 10 components of the metric plus the 3 components defining an $SO(3)$ frame.
It is useful to know that given a tetrad basis $e^I = e^0, e^i$ one can construct the self-dual and anti-self-dual 2-forms like so
\begin{align}
    \Sigma^i &= e^0 \wedge e^i - \frac{1}{2}\epsilon^{ijk}e^j \wedge e^k \label{eq:SelfDualTwoForms_From_Tetrad} \\
    \bar{\Sigma}^i &= e^0 \wedge e^i + \frac{1}{2} \epsilon^{ijk} e^j \wedge e^k. \label{eq:AntiSelfDualTwoForm_From_Tetrad}
\end{align}
The components of the 2-forms satisfy the quaternion algebra
\begin{align}
    \Sigma^i = \frac{1}{2}\Sigma^i_{\mu \nu} dx^\mu \wedge dx^\nu, \quad \Sigma^i_\mu{}^\alpha \Sigma^j_\alpha{}^\nu = -\delta^{ij}\delta_\mu^\nu + \epsilon^{ijk} \Sigma^k_\mu{}^\nu. \label{eq:Sigma_Quaternion_AlgebraSigma_Quaternion_Algebra}
\end{align}
The second equation in (\ref{eq:SD_Pleb_Field_Equations}) implies that the 1-form $A^i$ corresponds to the self-dual part of the Levi-Civita connection for the metric defined by the $\Sigma^i$'s, see \cite{PlebanskiFormuKrasno2009,FormGenRelGravity2020}, and also a more recent reference \cite{Bhoja:2024xbe}.

To analyse the final equation we look at the general decomposition of the SD part of the curvature 
\begin{align}
    F^i = \left( \Psi^{ij} + \frac{R}{3} \delta^{ij} \right) \Sigma^j + R^{ij} \bar{\Sigma}^j
\end{align}
The last line in \cref{eq:SD_Pleb_Field_Equations} then imposes that $R = \Lambda$ and $R^{ij} = 0$, which are precisely (vacuum) Einstein's equations.
This discussion shows that the critical points of the Plebanski action are ${\rm SU}(2)$ structures that are Einstein, in the sense that the metric defined by the triple of 2-forms according to the formula
\begin{align}\label{Urb-metric}
    g(u,v) \nu_g = \frac{1}{6}\epsilon^{ijk} \iota_u \Sigma^i \wedge \iota_v \Sigma^j \wedge \Sigma^k
\end{align}
is Einstein. Here $\nu_g$ is the volume form for the metric $g$, and both sides of the relation are top forms.

\subsection{SU(2) Structures and Killing Vectors}\label{sec:SU2_Killing_Vectors}

Killing vectors fields are infinitesimal symmetries of the metric $\mathcal{L}_X g_{\mu \nu} = 0$, where $\mathcal{L}$ is the Lie derivative. For our purposes below we need to characterise Killing vectors from the point of view of $SU(2)$ structures. Given that the metric (\ref{Urb-metric}) defined by a triple of 2-forms $\Sigma^i$ is invariant with respect to the ${\rm SO}(3)$ action on $\Sigma^i$, it is suggestive that in order for a vector field $X$ to be infinitesimal isometries at the level of $\Sigma^i$, the Lie derivative of $\Sigma^i$ should be an ${\rm SO}(3)$ transformation
 \begin{align}\label{def:KillingVectorDef}
        \mathcal{L}_X \Sigma^i = \epsilon^{ijk} \tilde{\theta}^j \Sigma^k.
    \end{align}
We will call a vector field $X$ satisfying this property a {\bf $\Sigma$-Killing vector field}. 

The following lemma describes an equivalent way of stating the $\Sigma$-Killing condition
\begin{lemma}
    $\Sigma$-Killing vectors $X \in TM$ can be equivalently described as those satisfying 
    \begin{align}
       d^A \iota_X \Sigma^i = \epsilon^{ijk} \theta^j \Sigma^k,
    \end{align}
    where $d^A$ is the exterior covariant derivative with respect to the $\Sigma$-compatible $d^A \Sigma^i=0$ connection $A^i$. 
    \end{lemma}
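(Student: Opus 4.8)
The plan is to use the Cartan formula $\mathcal{L}_X = d\,\iota_X + \iota_X d$ together with the $\Sigma$-compatibility equation $d\Sigma^i = -\epsilon^{ijk}A^j\wedge\Sigma^k$ from \cref{eq:SD_Pleb_Field_Equations}. First I would rewrite the exterior covariant derivative explicitly: for a 1-form valued in the adjoint,
\begin{align}
d^A \iota_X\Sigma^i = d\,\iota_X\Sigma^i + \epsilon^{ijk} A^j\wedge\iota_X\Sigma^k.
\end{align}
Using Cartan's formula, $d\,\iota_X\Sigma^i = \mathcal{L}_X\Sigma^i - \iota_X d\Sigma^i$, and then substituting the $\Sigma$-compatibility relation $d\Sigma^i = -\epsilon^{ijk}A^j\wedge\Sigma^k$ gives
\begin{align}
\iota_X d\Sigma^i = -\epsilon^{ijk}\iota_X(A^j\wedge\Sigma^k) = -\epsilon^{ijk}\big((\iota_X A^j)\Sigma^k - A^j\wedge\iota_X\Sigma^k\big).
\end{align}
Combining these, the $A^j\wedge\iota_X\Sigma^k$ terms cancel and one is left with $d^A\iota_X\Sigma^i = \mathcal{L}_X\Sigma^i + \epsilon^{ijk}(\iota_X A^j)\Sigma^k$.

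Now the equivalence follows almost immediately. If $X$ is $\Sigma$-Killing, i.e. $\mathcal{L}_X\Sigma^i = \epsilon^{ijk}\tilde\theta^j\Sigma^k$, then $d^A\iota_X\Sigma^i = \epsilon^{ijk}(\tilde\theta^j + \iota_X A^j)\Sigma^k$, so the claimed identity holds with $\theta^j := \tilde\theta^j + \iota_X A^j$. Conversely, if $d^A\iota_X\Sigma^i = \epsilon^{ijk}\theta^j\Sigma^k$ for some triple $\theta^j$, then rearranging gives $\mathcal{L}_X\Sigma^i = \epsilon^{ijk}(\theta^j - \iota_X A^j)\Sigma^k$, which is exactly the $\Sigma$-Killing condition with $\tilde\theta^j := \theta^j - \iota_X A^j$. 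So the two conditions are equivalent, with the two sets of ``rotation parameters'' related by a shift by $\iota_X A^j$.

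I do not expect a serious obstacle here; the computation is a direct application of Cartan's magic formula and the structure equation, and the only thing to be careful about is bookkeeping of the $\epsilon^{ijk}$ contractions and the sign convention in $d\Sigma^i = -\epsilon^{ijk}A^j\wedge\Sigma^k$ (reading off from \cref{eq:SD_Pleb_Field_Equations}). The one conceptual point worth flagging in the writeup is that the auxiliary 1-forms $\tilde\theta$ and $\theta$ are \emph{not} the same — they differ by the gauge-dependent term $\iota_X A^i$ — so the lemma is really an identity of the form ``$\mathcal{L}_X\Sigma^i$ is a pointwise rotation of the $\Sigma^i$ iff $d^A\iota_X\Sigma^i$ is,'' with the rotation generators shifted accordingly; I would state this explicitly to avoid confusion in later sections where these $\theta^i$ are used.
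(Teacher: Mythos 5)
Your proof is correct and follows essentially the same route as the paper's: Cartan's magic formula combined with $d^A\Sigma^i=0$, yielding the equivalence with the two rotation parameters related by a shift by $\iota_X A^i$. The only (immaterial) difference is the sign of that shift --- you obtain $\theta^j=\tilde\theta^j+\iota_X A^j$ whereas the paper records $\theta^i=\tilde\theta^i-\iota_X A^i$; since $\tilde\theta^i$ is arbitrary this does not affect the statement, and your sign is in fact the one consistent with the graded Leibniz rule for $\iota_X$ applied to $A^j\wedge\Sigma^k$.
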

\begin{proof}
    To see this we use the Cartan's magic formula
    \begin{align}
        \mathcal{L}_X \Sigma^i = \iota_X d\Sigma^i + d \iota_X \Sigma^i.\label{eq:Cartans_Magic_Formula}
    \end{align}
   We now require this to be a gauge transformation    
       \begin{align}
        \mathcal{L}_X \Sigma^i = \epsilon^{ijk} \tilde{\theta}^j \Sigma^k \label{eq:SU2_LieD_KillingEqn}.
    \end{align}
    Using $d^A \Sigma^i = 0$ and \cref{eq:Cartans_Magic_Formula} we find 
    \begin{align}
        \mathcal{L}_X \Sigma^i = d^A \iota_X \Sigma^i + \epsilon^{ijk} (\iota_X A)^j \Sigma^k = \epsilon^{ijk} \tilde{\theta}^j \Sigma^k  \quad \Rightarrow \quad d^A \iota_X \Sigma^i = \epsilon^{ijk} \theta^j \Sigma^k
    \end{align}
    where $\theta^i = \tilde{\theta}^i - \iota_X A^i$ is the shifted $SO(3)$ gauge transformation parameter.
\end{proof}

Next we now link $\Sigma$-Killing vector fields with the usual Killing vector fields.
\begin{proposition}
    A $\Sigma$-Killing vector field is a Killing vector field for the metric (\ref{Urb-metric}) defined by the triple of 2-forms $\Sigma^i$. 
\end{proposition}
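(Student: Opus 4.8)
The plan is to show that the $\Sigma$-Killing condition $\mathcal{L}_X \Sigma^i = \epsilon^{ijk}\tilde\theta^j \Sigma^k$ forces $\mathcal{L}_X g = 0$ by reading off the metric directly from the triple $\Sigma^i$ via the Urbantke-type formula \eqref{Urb-metric}. The key observation is that the right-hand side of \eqref{Urb-metric}, built by contracting and wedging the $\Sigma^i$ and then symmetrising with $\epsilon^{ijk}$, is manifestly $\mathrm{SO}(3)$-invariant: an infinitesimal rotation $\delta \Sigma^i = \epsilon^{ijk}\tilde\theta^j \Sigma^k$ acts on the three factors of $\iota_u\Sigma^i \wedge \iota_v\Sigma^j \wedge \Sigma^k$ by three $\mathfrak{so}(3)$ generators whose sum annihilates the $\epsilon^{ijk}$-contraction, since $\epsilon^{ijk}$ is the invariant tensor. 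So the first step is to write $\mathcal{L}_X$ of the right-hand side of \eqref{Urb-metric} using the Leibniz rule, noting that $\mathcal{L}_X$ commutes with $\iota_{[X,\cdot]}$ corrections and with $d$; one has to be a little careful here because $\mathcal{L}_X \iota_u \Sigma^i = \iota_u \mathcal{L}_X \Sigma^i + \iota_{[X,u]}\Sigma^i$, so I would phrase the computation as acting on both sides of \eqref{Urb-metric} with $\mathcal{L}_X$ and using that it is a derivation on the full expression.

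Concretely, I would proceed as follows. First, apply $\mathcal{L}_X$ to the defining identity $g(u,v)\,\nu_g = \tfrac{1}{6}\epsilon^{ijk}\iota_u\Sigma^i\wedge\iota_v\Sigma^j\wedge\Sigma^k$ for fixed vector fields $u,v$ that commute with $X$ (it suffices to verify $\mathcal{L}_X g = 0$ on such a frame, which always exists locally). The left-hand side becomes $(\mathcal{L}_X g)(u,v)\,\nu_g + g(u,v)\,\mathcal{L}_X\nu_g$. The right-hand side, by Leibniz and $[X,u]=[X,v]=0$, becomes a sum of three terms, each with one factor replaced by $\mathcal{L}_X\Sigma^i = \epsilon^{ilm}\tilde\theta^l\Sigma^m$; contracting the resulting $\epsilon^{ijk}\epsilon^{ilm}$ (etc.) structures and relabelling shows these three terms cancel in pairs, leaving zero. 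Hence $(\mathcal{L}_X g)(u,v)\,\nu_g = -\,g(u,v)\,\mathcal{L}_X\nu_g$. Second, I must independently show $\mathcal{L}_X\nu_g = 0$: taking $u=v$ in the relation above and tracing, or more directly noting that $\nu_g$ is built from $\Sigma^i\wedge\Sigma^i$ up to normalisation (from the first field equation $\Sigma^i\wedge\Sigma^j\sim\delta^{ij}\nu_g$), so $\mathcal{L}_X(\Sigma^i\wedge\Sigma^i) = 2\epsilon^{ijk}\tilde\theta^j\Sigma^k\wedge\Sigma^i = 0$ by antisymmetry of $\epsilon^{ijk}$ against the symmetric $\Sigma^k\wedge\Sigma^i$. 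Therefore $\mathcal{L}_X\nu_g=0$, and combining with the previous step gives $(\mathcal{L}_X g)(u,v) = 0$ for all $u,v$ in the frame, i.e. $\mathcal{L}_X g = 0$.

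The main obstacle I anticipate is purely bookkeeping: correctly handling the interchange of $\mathcal{L}_X$ with the interior products $\iota_u,\iota_v$ and making sure the $[X,u]$, $[X,v]$ terms are legitimately set to zero (which is why I would fix a commuting frame from the outset), and then carefully checking that the $\epsilon^{ijk}$-contracted sum of the three Leibniz terms genuinely vanishes rather than merely being $\mathrm{SO}(3)$-covariant. A cleaner packaging of the same idea, which I would use if the index gymnastics get unwieldy, is to invoke the general principle that any diffeomorphism-natural, $\mathrm{SO}(3)$-invariant algebraic construction of a metric from the $\Sigma^i$ must be preserved whenever the $\Sigma^i$ are transformed by a pointwise $\mathrm{SO}(3)$ rotation composed with the flow of $X$; since $\mathcal{L}_X\Sigma^i$ is exactly such an infinitesimal rotation, $\mathcal{L}_X g$ is the derivative of $g$ along a one-parameter family of $\mathrm{SO}(3)$-rotated structures all defining the same metric, hence zero. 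Everything else is routine.
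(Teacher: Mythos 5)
Your proof is correct, and it takes a genuinely different route from the paper's. You argue by naturality: the Urbantke formula (\ref{Urb-metric}) is an $\mathrm{SO}(3)$-invariant algebraic expression in the $\Sigma^i$, so an infinitesimal pointwise rotation $\mathcal{L}_X\Sigma^i=\epsilon^{ijk}\tilde\theta^j\Sigma^k$ drops out of the $\epsilon^{ijk}$-contraction (invariance of $\epsilon^{ijk}$ kills the sum of the three Leibniz terms), and $\mathcal{L}_X\nu_g=0$ follows from $\Sigma^i\wedge\Sigma^j\sim\delta^{ij}\nu_g$ together with the antisymmetry of $\epsilon^{ijk}$ against $\Sigma^k\wedge\Sigma^i$. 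The two points that need care --- handling $\mathcal{L}_X\iota_u=\iota_u\mathcal{L}_X+\iota_{[X,u]}$ via a locally commuting frame, and invoking the wedge-orthonormality of the $\Sigma^i$ (part of the standing $SU(2)$-structure assumption) for the volume form --- are both correctly identified and legitimately handled; the restriction to a commuting frame is harmless since the $[X,u]$ terms would in any case cancel between the two sides of (\ref{Urb-metric}). The paper instead works with the equivalent condition $d^A\iota_X\Sigma^i=\epsilon^{ijk}\theta^j\Sigma^k$ and decomposes the $E$-valued 2-form $d^A\iota_X\Sigma^i$ into the irreducible pieces $(\Lambda^2\otimes E)_{1,3,5,9}$, finding that its components are exactly $\nabla_\mu X^\mu$, the self-dual part of $dX^\sharp$, zero, and $\nabla_{\langle\mu}X_{\nu\rangle}$. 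What your argument buys is brevity and conceptual transparency: invariance alone gives $\mathcal{L}_Xg=0$ with essentially no index computation. What the paper's computation buys is the full \emph{equivalence} of the $\Sigma$-Killing and Killing conditions (your argument only gives the implication stated in the proposition), together with the explicit identification $\theta^i=-\tfrac14\Sigma^{i\mu\nu}\nabla_{[\mu}X_{\nu]}$ of the rotation parameter as the self-dual projection of $dX^\sharp$, which is the kind of concrete handle that the later sections exploit.
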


Before we prove this statement, we need to remind the reader some facts about the decomposition of $\mathfrak{su}(2)$-valued 2-forms. This construction is from \cite{Bhoja:2024xbe}, and we will only summarise the main points. Let us denote $E=\mathfrak{su}(2)$, and consider the space of $E$-valued 2-forms. We introduce an endomorphism acting on this space 
\begin{align}
    J_2 : \Lambda^2 \otimes E \to \Lambda^2 \otimes E, \quad J_2(B)^i_{\mu\nu} = \epsilon^{ijk} \Sigma^j_{[\mu}{}^\alpha B^k_{|\alpha|\nu]}, \quad B^i_{\mu \nu} \in \Lambda^2 \otimes E.    
\end{align}
The reference \cite{Bhoja:2024xbe} shows that this operator satisfies 
\begin{align}
    J_2 (J_2 - 2)(J_2 - 1)(J_2 + 1) = 0,
\end{align}
and thus has eigenvalues $0,-1,1,2$. The space $\Lambda^2 \otimes E$ decomposes into the eigenspaces of this operator
\begin{align}
    \Lambda^2 \otimes E = (\Lambda^2 \otimes E)_1 \oplus (\Lambda^2 \otimes E)_3 \oplus (\Lambda^2 \otimes E)_5 \oplus (\Lambda^2 \otimes E)_9.
\end{align}
The subscript in $(\Lambda^2 \otimes E)_k$ indicates the dimension of the corresponding space. The spaces $(\Lambda^2 \otimes E)_{1,3,5,9}$ are the eigenspaces of $J_2$ of eigenvalue $2,1,-1,0$ respectively, see section 4.4 of \cite{Bhoja:2024xbe} for more details. It can also be shown that $(\Lambda^2 \otimes E)_9 = \Lambda^- \otimes E$. Here $\Lambda^-$ is the space of anti-self dual 2-forms, with $\Lambda^2=\Lambda^+\oplus\Lambda^-$. We also have the following explicit projections to each of the eigenspaces
\begin{align}\nonumber
B^{(i}_{\alpha\beta} \Sigma^{j)\alpha\beta} \in S^2(E), \\ \label{irreducibles}
\epsilon^{ijk} B^j_{\alpha\beta} \Sigma^{k\alpha\beta} \in E, \\ \nonumber
B^i_{(\mu|\alpha|} \Sigma^{i\alpha}{}_{\nu)} \in S^2(T^*M) .
\end{align}
The described maps project away some of the irreducible components, while keeping the others. Thus, the kernel of the first map is $(\Lambda^2 \otimes E)_{3+9}$, and so this is a projection onto the $(\Lambda^2 \otimes E)_{5+1}$ component. The second map is a projection onto the $(\Lambda^2 \otimes E)_3$ component. The last is a projection onto the 
$(\Lambda^2 \otimes E)_{1+9}$ component. The $(\Lambda^2 \otimes E)_1$ component, which contains multiples of $\Sigma^i$, is contained in both the first and last projection as the trace part. 

\begin{proof}
We now prove the proposition. To this end, we consider the $E$-valued 2-form $d^A \iota_X \Sigma^i$, for an arbitrary vector field $X$, and compute its irreducible parts. We have
\begin{align}
    (d^A \iota_X \Sigma^i)_{\mu \nu} = \partial_{[\mu} ( X^{\rho} \Sigma^i_{|\rho| \nu]} )  + \epsilon^{ijk} A^j_{[\mu} X^{\rho} \Sigma^i_{|\rho| \nu]}.
   \end{align}
We can replace the partial derivative in the first term by the covariant derivative with respect to the metric, because it is applied to a 1-form, and the result is $\mu\nu$ anti-symmetrised. We thus have    
  \begin{align}\label{d-i-Sigma}
    (d^A \iota_X \Sigma^i)_{\mu \nu}   = \nabla_{[\mu} ( X^{\rho} \Sigma^i_{|\rho| \nu]} )  + \epsilon^{ijk} A^j_{[\mu} X^{\rho} \Sigma^i_{|\rho| \nu]}.
     \end{align} 
 We now use the fact that the connection 1-forms $A^i_\mu$ are the "intrinsic torsion" components of the ${\rm SU}(2)$ structure, i.e. we have the following equation
  \begin{align}\label{torsion}
 \nabla_\mu \Sigma^i_{\rho\nu} + \epsilon^{ijk} A^j_\mu \Sigma^k_{\rho\nu}=0.
  \end{align} 
 This fact is proven in  \cite{Bhoja:2024xbe}, by considering the $E$-valued 2-forms $X^\mu \nabla_\mu \Sigma^i_{\rho\sigma}$, and showing that only the $(\Lambda^2 \otimes E)_3$ projection is non-vanishing. This is then parametrised by $X^\mu A_\mu^i$. Contracting (\ref{torsion}) with $X^\rho$ and $\mu\nu$ anti-symmetrising we see that we can rewrite  (\ref{d-i-Sigma}) as
  \begin{align}   
    (d^A \iota_X \Sigma^i)_{\mu \nu}  = \Sigma^i_{[\mu}{}^\rho \nabla_{\nu]} X_\rho.
\end{align}
We can now use the algebra \cref{eq:Sigma_Quaternion_AlgebraSigma_Quaternion_Algebra} of $\Sigma$'s to compute the projections of 
 $d^A \iota_X \Sigma^i$ onto the irreducible components. We have
\begin{align}
    (\Lambda^2 \otimes E)_1&:  \quad \Sigma^{i\mu\nu} (d^A \iota_X \Sigma^i)_{\mu \nu} = 3 \nabla_\mu X^\mu \nonumber\\
    (\Lambda^2 \otimes E)_3&: \quad \epsilon^{ijk} \Sigma^{j\mu \nu} (d^A \iota_X \Sigma^k)_{\mu \nu} = -2\Sigma^{i \mu \nu} \nabla_{[\mu} X_{\nu]} \\
    (\Lambda^2 \otimes E)_5&: \quad \Sigma^{\langle i | \mu \nu} (d^A \iota_X \Sigma^{|j\rangle})_{\mu \nu} = 0 \nonumber\\
    (\Lambda^2 \otimes E)_9&: \quad \Sigma^i_{\langle \mu|}{}^\alpha (d^A \iota_X \Sigma^i)_{\alpha | \nu \rangle} = -3 \nabla_{\langle \mu} X_{\nu \rangle} \nonumber
\end{align}
Here $\langle \mu \nu \rangle$ denotes the symmetric traceless part.
We now assume that the $\Sigma$-Killing equation is satisfied and so $d^A \iota_X \Sigma^i= \epsilon^{ijk} \theta^j \Sigma^k$. The right-hand side here has the only non-vanishing projection 
\begin{align}
    (\Lambda^2 \otimes E)_3: \quad \epsilon^{ijk} \Sigma^{j\mu\nu} (\epsilon^{klm} \theta^l \Sigma^m)_{\mu\nu} = 8 \theta^i.
\end{align}
It is then clear that the $\Sigma$-Killing equation is equivalent to 
\begin{align}
    \nabla_{(\mu} X_{\nu)} = 0, \quad \theta^i = -\frac{1}{4}\Sigma^{i\mu\nu} \nabla_{[\mu} X_{\nu]}.
\end{align}
The first is the usual Killing equation, proving the proposition. The last relation gives $\theta^i$ in terms of the vector field $X$. In words, $\theta^i$ is the vector giving the self-dual projection of $dX^\sharp$, where $X^\sharp$ is the one-form corresponding to the vector field $X$. 
\end{proof}

\section{One-sided type D metrics are conformal to K\"ahler} \label{sec:ConformalToKahler}

The aim of this section is to present an alternative (and new) proof of the theorem by Derdzi\'nski \cite{SelfDualKahleDerdzi1983}. We use the chiral formalism, which makes the statement almost manifest. We also prove another statement from \cite{SelfDualKahleDerdzi1983}, namely that all one-sided type D Einstein manifolds have a Killing vector. 

\subsection{Derdzi\'nski theorem}

We start by using the chiral formalism to prove the following theorem 
\begin{theorem}
    Let $(M,g)$ be a smooth 4-manifold that is one-sided type D, i.e. with one of the two halves of the Weyl curvature, say SD, having two coinciding eigenvalues. Assume in addition that the SD part of the Weyl curvature is divergence-free $\nabla \cdot W^+ = 0$. Then $g$ is conformal to a K\"ahler metric. \label{thrm:DerdziConfKahler}
\end{theorem}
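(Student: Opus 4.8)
The plan is to exploit the chiral Plebanski formalism in the form already set up in \cref{sec:Chiral_Formalism}. The key structural fact is the last equation in \cref{eq:SD_Pleb_Field_Equations}: for an Einstein metric, the curvature of the self-dual connection $A^i$ satisfies $F^i = (\Psi^{ij} + \frac{\Lambda}{3}\delta^{ij})\Sigma^j$, where $\Psi^{ij}$ is precisely the matrix of $W^+$. The ``one-sided type D'' hypothesis says that, as a symmetric traceless $3\times 3$ matrix, $\Psi^{ij}$ has a double eigenvalue; equivalently, after an ${\rm SO}(3)$ rotation of the frame $\Sigma^i$ (which, as noted after \cref{Urb-metric}, does not change the metric), we may write $\Psi^{ij} = {\rm diag}(-2c,c,c)$ for some scalar function $c$, so that $\Psi^{ij} + \frac{\Lambda}{3}\delta^{ij}$ has eigenvalues $(\lambda_1,\lambda_2,\lambda_2)$ with $\lambda_1 = -2c + \frac{\Lambda}{3}$ and $\lambda_2 = c + \frac{\Lambda}{3}$. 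First I would show that $\Sigma^3$ (the 2-form singled out by the ``special'' eigendirection, i.e. the one multiplying $\lambda_1$) is, after a conformal rescaling, closed, and hence defines the K\"ahler form.

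The heart of the argument is the following computation. From $d^A\Sigma^i = 0$ we get $d\Sigma^1 = -\epsilon^{1jk}A^j\wedge\Sigma^k = -(A^2\wedge\Sigma^3 - A^3\wedge\Sigma^2)$, and cyclically. Now consider the 2-form $\Omega := f\,\Sigma^1$ for a function $f$ to be chosen (I am relabelling so the special direction is $i=1$). Then
\begin{align}
d\Omega = df\wedge\Sigma^1 + f\,d\Sigma^1 = df\wedge\Sigma^1 - f\,(A^2\wedge\Sigma^3 - A^3\wedge\Sigma^2).
\end{align}
I want to choose $f$ so that this vanishes. The obstruction to closedness is governed by the components $A^2, A^3$ of the self-dual connection in the ``non-special'' directions, and the Bianchi identity $d^A F^i = 0$ applied to the Einstein equation $F^i = (\Psi^{ij}+\frac{\Lambda}{3}\delta^{ij})\Sigma^j$ together with the divergence-free condition $\nabla\cdot W^+ = 0$ is exactly what controls these. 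Concretely, $d^A F^i = 0$ expands, using $d^A\Sigma^i = 0$, to $(d\Psi^{ij})\wedge\Sigma^j + \epsilon^{ikl}A^k\wedge\Psi^{lj}\Sigma^j = 0$; decomposing this into irreducibles (as in the list \cref{irreducibles}) isolates the differential Bianchi identity $\nabla_j\Psi^{ij} \propto (\nabla\cdot W^+)^i$, which vanishes by hypothesis. This forces a relation between $dc$ (the derivative of the degenerate eigenvalue) and the off-diagonal connection components $A^2, A^3$ acting on $\Sigma^1$; unwinding it shows that $A^2\wedge\Sigma^3 - A^3\wedge\Sigma^2$ is proportional to $d(\log c)\wedge\Sigma^1$ up to the normalisation constants, so that the choice $f = |\lambda_1 - \lambda_2|^{2/3} = |c|^{2/3}$ (equivalently, $f$ a suitable power of the difference of eigenvalues of $W^+$) makes $d\Omega = 0$.

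I would then finish as follows. The 2-form $\Omega = f\,\Sigma^1$ is, by construction, self-dual of unit-type with respect to the conformally rescaled metric $\tilde g = f\, g$ (a self-dual 2-form of the original metric, rescaled by $f$, is self-dual and has the right norm for $\tilde g$ because Hodge star on 2-forms in 4D is conformally invariant and $|\Sigma^1|^2_g$ is constant). Since $\Omega$ is closed and is a nondegenerate self-dual 2-form, the almost-complex structure $\tilde J$ it defines via $\tilde g(\tilde J\cdot,\cdot) = \Omega(\cdot,\cdot)$ satisfies $d\Omega = 0$ with $\Omega$ of constant $\tilde g$-norm, which is precisely the statement that $\tilde J$ is integrable and $\tilde g$ is K\"ahler (in 4D, a closed self-dual 2-form of constant norm is a K\"ahler form). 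Hence $g = f^{-1}\tilde g$ is conformal to the K\"ahler metric $\tilde g$.

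\medskip

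The main obstacle I anticipate is the middle step: turning the abstract content of $\nabla\cdot W^+ = 0$ (equivalently the second Bianchi identity in chiral form) into the concrete statement that the ``connection obstruction'' $A^2\wedge\Sigma^3 - A^3\wedge\Sigma^2$ is an exact multiple of $\Sigma^1$. This requires carefully expanding $d^A F^i = 0$, substituting the degenerate form of $\Psi^{ij}$, and projecting onto the correct irreducible component to extract a clean scalar ODE/relation for $c$ versus the relevant connection components; keeping track of the ${\rm SO}(3)$ frame freedom (one is free to rotate within the $2$--$3$ plane, which should be fixed or shown irrelevant) is the fiddly part. Everything else — the conformal invariance of self-duality, the passage from a closed self-dual 2-form of constant norm to K\"ahler — is standard and can be quoted.
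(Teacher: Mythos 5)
Your first half is essentially the paper's argument: diagonalise $\Psi^{ij}$ using the type~D hypothesis, write out the $\Sigma^1$-component of $d^A(\Psi^{ij}\Sigma^j)=0$, and conclude that $A^2\wedge\Sigma^3-A^3\wedge\Sigma^2$ is a multiple of $d(\log\alpha)\wedge\Sigma^1$, so that $\alpha^{2/3}\Sigma^1$ is closed. One caveat there: the theorem does \emph{not} assume the metric is Einstein, so you are not entitled to $F^i=(\Psi^{ij}+\tfrac{\Lambda}{3}\delta^{ij})\Sigma^j$ or to $d^AF^i=0$; fortunately the equation you actually write down, $(d\Psi^{ij})\wedge\Sigma^j+\epsilon^{ikl}A^k\wedge\Psi^{lj}\Sigma^j=0$, i.e.\ $d^A(\Psi^{ij}\Sigma^j)=0$, is exactly equivalent to the hypothesis $\nabla\cdot W^+=0$ (this is \cref{eq:Divergence_Free_SD_Weyl}), so the detour through the Einstein equation is unnecessary rather than fatal.

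The genuine gap is in your concluding step. The claim that ``in 4D, a closed self-dual 2-form of constant norm is a K\"ahler form'' is false: such a form defines an \emph{almost}-K\"ahler structure (a compatible symplectic form), but closedness of $\omega$ does not force integrability of the associated almost complex structure. In the Gray--Hervella decomposition of the intrinsic torsion of a $U(2)$-structure in dimension four, $d\omega$ controls only the Lee-form component $W_4$, while the Nijenhuis tensor sits in the independent component $W_2$; strictly almost-K\"ahler 4-manifolds exist. So after establishing $d(\alpha^{2/3}\Sigma^1)=0$ you have only shown that $\tilde g=\alpha^{2/3}g$ is almost-K\"ahler, not K\"ahler. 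The missing input is precisely the other two components of $d^A(\Psi^{ij}\Sigma^j)=0$, which your proposal discards. The paper uses them to show that $\Omega=\alpha^{2/3}(\Sigma^2+i\Sigma^3)$ satisfies $d\Omega+ia\,\Omega=0$ for a real 1-form $a$ (\cref{eq:ConfKahler_IntegrableComplexStrcture}); since $\Omega$ is decomposable (by $\Omega\wedge\Omega=0$) its factors define the $(1,0)$-forms of the almost complex structure, and $d\Omega\in\Lambda^{2,1}$ is then exactly the integrability condition. Only the combination of that integrability statement with your closedness statement, plus the algebraic relations $\omega\wedge\Omega=0$ and $\omega^2=\tfrac12\Omega\wedge\bar\Omega$, yields the K\"ahler conclusion. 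As written, your proof would ``prove'' too much: it would show that any one-sided type~D metric for which only the special component of the divergence-free equation holds is conformally K\"ahler, which is not the case.
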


This was first proved by Derdzi\'nski in \cite{SelfDualKahleDerdzi1983}. Note that there is no assumption here that the metric is Einstein, one only assumes that the SD Weyl tensor is divergence-free. This is true for Einstein manifolds (as we will also see later), but is a weaker condition. The proof we present here is an elementary consequence of the chiral Plebanski formalism. In fact, one can motivate the usefulness of the chiral formalism by the fact that this, reasonably non-trivial statement of Riemannian geometry, becomes apparent in this formalism. 

We start by translating the devergence-free condition as a condition on objects in the Plebanski formalism. As is clear from \cref{eq:Riemann_SD_ASD_Decomposition}, the self-dual part of the Weyl curvature is directly related to the matrix $\Psi^{ij}$, we have $W^+_{\mu \nu \rho \sigma} = \Psi^{ij} \Sigma^i_{\mu\nu} \Sigma^j_{\rho\sigma}$.
The condition that the self-dual Weyl curvature is divergence-free translates to
\begin{align}\label{div-W}
    0 = (\nabla \cdot W^+)_{\nu \rho \sigma} = \nabla^\mu W^+_{\mu \nu \rho \sigma} = \nabla^{A\mu} \left( \Psi^{ij} \Sigma^i_{\mu \nu} \Sigma^j_{\rho\sigma} \right) = \nabla^{A\mu} \Psi^{ij} \Sigma^i_{\mu \nu} \Sigma^j_{\rho \sigma},
\end{align}
where $\nabla^A_\mu$ is the total gauge covariant derivative $\nabla^A_\mu \theta^i_\nu = \nabla_\mu \theta^i_\nu + \epsilon^{ijk} A^j_\mu \theta^k_\nu$ and $A^i$ is the self-dual connection defined by $d^A \Sigma^i = 0$. As we have already remarked in (\ref{torsion}), by construction, this covariant derivative has the property $\nabla_\mu^A \Sigma^i_{\rho\sigma}=0$, which justifies the last equality. 

The object on the right-hand side of (\ref{div-W}) is self-dual with respect to the index pair $\rho\sigma$. It can therefore be converted to an $E$-valued one-form 
\begin{align}\label{div-W-1} 
    (\nabla \cdot W^+)_{\nu \rho \sigma} \Sigma^{i\rho\sigma} = 4\nabla^{A\mu}(\Psi^{ij} \Sigma^j_{\mu \nu}) =4 \nabla^{A\mu}(\Psi^{ij}) \Sigma^j_{\mu \nu}.
\end{align}
This can be rewritten in form notations. Considering the 3-form $(\nabla^A \Psi^{ij}) \Sigma^j$, and taking the Hodge star we obtain a multiple of (\ref{div-W-1}). Thus, we see that the following two statements are equivalent
\begin{align}
    (\nabla \cdot W^+) = 0 \quad \Longleftrightarrow   \quad d^A (\Psi^{ij} \Sigma^j) = 0. \label{eq:Divergence_Free_SD_Weyl}
\end{align}

We now have the following simple corollary of the Plebanski formalism
\begin{proposition}
    Every Einstein 4-Manifold has divergence free self-dual Weyl curvature, $\nabla \cdot W^+ = 0$. \label{rem:Einstein_Divergence_Free}
\end{proposition}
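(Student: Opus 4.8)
The plan is to use the characterisation of the divergence-free condition already established in equation~\eqref{eq:Divergence_Free_SD_Weyl}, namely that $\nabla\cdot W^+=0$ is equivalent to $d^A(\Psi^{ij}\Sigma^j)=0$. So it suffices to show that for an Einstein metric this closedness holds identically. The natural route is to start from the last of the Plebanski field equations, $F^i=(\Psi^{ij}+\tfrac{\Lambda}{3}\delta^{ij})\Sigma^j$, which holds at critical points of the action, i.e. precisely for Einstein metrics.

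First I would apply the exterior covariant derivative $d^A$ to both sides of $F^i=(\Psi^{ij}+\tfrac{\Lambda}{3}\delta^{ij})\Sigma^j$. On the left-hand side, the Bianchi identity for the $\mathfrak{su}(2)$ connection $A^i$ gives $d^A F^i=0$. On the right-hand side, since $\Lambda$ is constant and $d^A\Sigma^j=0$ (the second Plebanski equation), the only surviving term is $d^A(\Psi^{ij}\Sigma^j)=(d^A\Psi^{ij})\wedge\Sigma^j$. Here $d^A\Psi^{ij}$ means the gauge-covariant exterior derivative acting on the symmetric traceless matrix $\Psi$, carrying both $\mathfrak{su}(2)$ indices. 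Putting the two sides together yields $d^A(\Psi^{ij}\Sigma^j)=0$, which by~\eqref{eq:Divergence_Free_SD_Weyl} is exactly $\nabla\cdot W^+=0$. That is the whole argument.

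The one point requiring a little care is the Bianchi identity in this non-abelian setting: I would note that $d^A F^i=dF^i+\epsilon^{ijk}A^j\wedge F^k$ vanishes identically as an algebraic consequence of $F^i=dA^i+\tfrac12\epsilon^{ijk}A^j\wedge A^k$, and that the $d^A$ appearing when differentiating the right-hand side is the \emph{same} covariant derivative (acting on the composite object $\Psi^{ij}\Sigma^j$, which transforms in the appropriate representation), so the two $d^A$'s legitimately cancel. I would also remark that one does not even need the full Einstein equations: only $d^A\Sigma^i=0$ together with $F^i$ being of the form $(\text{symmetric matrix})\Sigma^j$ with the trace part constant is used, consistent with the earlier observation that divergence-freeness of $W^+$ is weaker than Einstein.

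I do not anticipate a genuine obstacle here; the statement is, as the authors say, an immediate corollary of the formalism. The only thing to get right is bookkeeping: making sure the Leibniz rule for $d^A$ on $\Psi^{ij}\Sigma^j$ is applied correctly (the $\Sigma^j$ factor is annihilated, leaving $(d^A\Psi^{ij})\wedge\Sigma^j$), and that the trace/constant part $\tfrac{\Lambda}{3}\delta^{ij}\Sigma^j$ drops out under $d^A$ precisely because $\Lambda$ is constant and $d^A\Sigma^i=0$. Once those are in place, the equivalence~\eqref{eq:Divergence_Free_SD_Weyl} closes the proof.
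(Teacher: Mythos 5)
Your proposal is correct and follows essentially the same route as the paper: apply $d^A$ to the Plebanski form of the Einstein equation $F^i = (\Psi^{ij} + \tfrac{\Lambda}{3}\delta^{ij})\Sigma^j$, invoke the Bianchi identity $d^A F^i = 0$ together with $d^A\Sigma^i = 0$ and the constancy of $\Lambda$, and conclude via the established equivalence between $d^A(\Psi^{ij}\Sigma^j)=0$ and $\nabla\cdot W^+=0$. Your closing remark that only the form of the curvature equation (rather than the full Einstein condition) is needed is a nice observation consistent with the paper's comment that divergence-freeness of $W^+$ is strictly weaker than the Einstein condition.
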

\begin{proof}[Proof of \cref{rem:Einstein_Divergence_Free}]
    The proof follows from the Bianchi identity $d^A F^i= 0$. In Plebanski formalism, the Einstein equation takes the following form
    \begin{align}\label{Einstein}
        F^i = \left( \Psi^{ij} + \frac{\Lambda}{3} \delta^{ij} \right)\Sigma^j.
    \end{align}
    Applying the exterior covariant derivative on the left, and using the Bianchi identity, we have
    \begin{align}
        0 = d^A F^i = d^A \left( \Psi^{ij} \Sigma^j + \frac{\Lambda}{3} \Sigma^i \right) = d^A (\Psi^{ij} \Sigma^j ) = 0
    \end{align}
    where we have used $d^A \Sigma^i = 0$ and that $\Lambda$ is constant.
    Using \cref{eq:Divergence_Free_SD_Weyl} we see that Einstein 4-manifolds have divergence free self-dual Weyl curvature. 
\end{proof}

\begin{proof}[Proof of \cref{thrm:DerdziConfKahler}]
    Let us denote the eigenvalues of $\Psi^{ij}$ as $\alpha,\beta,\gamma$. We can assume, without loss of generality that $\alpha \geq \beta \geq \gamma$. 
    The traceless condition on $\Psi^{ij}$ imposes $\alpha + \beta + \gamma = 0$.
    This means that $\alpha > 0$ and $\gamma < 0$.
    There are then two possibilities for two eigenvalues to coincide. One is when the middle eigenvalue coincides with the negative one $\beta = \gamma$. The other is when the two positive eigenvalues coincide. Thus, the matrix $\Psi^{ij}$ is of the two possible forms
    \begin{align}
        \Psi^{ij} = \begin{pmatrix}
            2\alpha & 0 & 0 \\
            0 & -\alpha & 0 \\
            0 & 0 & -\alpha
        \end{pmatrix} \label{eq:TypeD_Weyl_Curvature_Matrix}, \qquad 
         \Psi^{ij} = \begin{pmatrix}
            \alpha & 0 & 0 \\
            0 & \alpha & 0 \\
            0 & 0 & -2\alpha
        \end{pmatrix}
    \end{align}
    with $\alpha$ being a positive real function. The two cases differ by the sign of ${\rm det}(\Psi)$. See section 2 of \cite{Biquard:2023gyl} for a related discussion. We consider the first case in details. The second case is treated analogously, by observing that it can be mapped to the first by allowing $\alpha$ to be negative and relabelling the basis vectors. In the proof below, we never need to assume that $\alpha>0$, and so the proof covers both of the cases. 
    
    We conclude that for every one-sided type D Riemannian metric, one can always pick a basis $\Sigma^i$ in the space of SD 2-forms, so that $\Psi^{ij}$ is diagonal and of the form of the first matrix in \cref{eq:TypeD_Weyl_Curvature_Matrix}. With this form of $\Psi^{ij}$ the equations (\ref{eq:Divergence_Free_SD_Weyl}) become
    \begin{align}
        d\left( 2\alpha \Sigma^1 \right) + A^2 \left( -\alpha \Sigma^3 \right) - A^3 \left( -\alpha \Sigma^2 \right) &= 0 \nonumber\\
        d\left( -\alpha \Sigma^2 \right) + A^3 \left( 2\alpha \Sigma^1 \right) - A^1 \left( -\alpha \Sigma^3 \right) &= 0 \label{eq:Explicit_Divergence_Equations}\\
        d\left( -\alpha \Sigma^3 \right) + A^1 \left( -\alpha \Sigma^2 \right) - A^2 \left( 2\alpha \Sigma^1 \right) &= 0. \nonumber
    \end{align}
    Using also $d^A \Sigma^i = 0$, these equations can be rewritten as follows 
    \begin{gather}
        2 d\alpha \Sigma^1 + 3\alpha d\Sigma^1 = 0\\
        d\alpha \Sigma^2 + 3\alpha (d\Sigma^2 - A^1 \Sigma^3) = 0, \quad d\alpha \Sigma^3 + 3\alpha (d\Sigma^3 + A^1 \Sigma^2) = 0. \label{eq:dSigma_Equations_123}
    \end{gather}
    The first equation in the above can be written as 
    \begin{align}
        d\omega = 0, \quad \omega:= \alpha^\frac{2}{3} \Sigma^1 \label{eq:ConfKahler_Closed_2form_Equation}.
    \end{align}
    We now introduce a scaled complex linear combinations of the remaining 2-forms
    \begin{align}
        \Omega := \alpha^\frac{2}{3} (\Sigma^2 + i \Sigma^3) \quad (resp. \quad \bar{\Omega} := \alpha^{\frac{2}{3}} (\Sigma^2 - i\Sigma^3)),
    \end{align}
    where the scaling factor in front is the same as for $\omega$. The last two equations in \cref{eq:Explicit_Divergence_Equations} can be rewritten as
    \begin{align}\label{d-Sigma-23}
    d\alpha (\Sigma^2+i \Sigma^3) + 3\alpha d(\Sigma^2+i \Sigma^3) + 3i \alpha A^1 (\Sigma^2+i \Sigma^3)=0.
    \end{align}
    For later purposes, we note that using $d^A \Sigma^i=0$ we can also rewrite the last two equations as
    \begin{align}
    d\alpha \Sigma^2 - 3\alpha A^3\Sigma^1=0, \qquad  d\alpha \Sigma^3 + 3\alpha A^2\Sigma^1=0,
   \end{align}
    and thus their complex linear combination gives
    \begin{align}\label{d-Omega-Sigma-1}
    d\alpha (\Sigma^2+i \Sigma^3) + 3i \alpha (A^2+i A^3) \Sigma^1=0.
    \end{align}
    This will be useful later. 
    
  We now use (\ref{d-Sigma-23}) to get an equation for $\Omega$
    \begin{align}\label{d-Omega}
        d\Omega +\left( i A^1 - \frac{d\alpha}{3\alpha}\right) \Omega = 0.
        \end{align}
We will rewrite this equation in a more suggestive way below.    
Finally, the algebraic equations $\Sigma^i\Sigma^j\sim \delta^{ij}$ imply the following equations
    \begin{align}
        \omega \Omega = 0, \quad \Omega \Omega = 0, \quad \omega^2 = \frac{1}{2}\Omega \bar{\Omega} \label{eq:ConfKahler_metricity}.
    \end{align}
    
    We can now understand where the conformal to K\"ahler comes from. The second equation in (\ref{eq:ConfKahler_metricity}) implies that the complex 2-form $\Omega$ is decomposable. Its two 1-form factors can be declared to be the $(1,0)$ forms of an almost complex structure, which defines this almost complex structure. The equation in (\ref{d-Omega}) can then be rewritten in a better way. It is clear that it states
    \begin{align}
        d\Omega +\left( i A^1 - \frac{1}{3\alpha} d\alpha \Big|_{(0,1)} \right) \Omega = 0,
        \end{align}
        because only the projection of $d\alpha$ to the space $\Lambda^{0,1}$ survives the wedge product with $\Omega$. We can then add another term that is valued in $\Lambda^{1,0}$ and rewrite this equation as 
          \begin{align}
        d\Omega +i a \Omega = 0, \qquad a = A^1 + \frac{1}{3i \alpha} ( d\alpha \Big|_{(1,0)}  - d\alpha \Big|_{(0,1)} ).\label{eq:ConfKahler_IntegrableComplexStrcture}
        \end{align}
        The connection 1-form $a$ is now real-valued. We note that it can also be written as       
        \begin{align}
        \quad a_\mu = A^1_\mu + \omega_\mu{}^\nu \partial_\nu \ln\left(\alpha^\frac{1}{3}\right). 
    \end{align}
    
    The equation (\ref{eq:ConfKahler_IntegrableComplexStrcture}) implies that this complex structure is integrable. Indeed, it implies that $d\Omega$ has only a $\Lambda^{2,1}$ part, which is equivalent to integrability. The first of the equations in (\ref{eq:ConfKahler_metricity}) implies that $\omega$ is in $\Lambda^{1,1}$ with respect to this complex structure. It can thus be declared to be the K\"ahler form. The last equation in  (\ref{eq:ConfKahler_metricity}) is the correct normalisation condition linking the volume form as defined by the K\"ahler form $\omega$ and the volume form defined by $\Omega$. The equation (\ref{eq:ConfKahler_Closed_2form_Equation}) states that the K\"ahler form is closed. Altogether, we deduce that the metric defined by $\omega,\Omega$ is K\"ahler, with Ricci form $\rho = da$. It is also clear that the original metric, defined by the $\Sigma^i$'s, is conformal to this K\"ahler metric 
    \begin{align}
        g_K = \alpha^\frac{2}{3} g_\Sigma
    \end{align}
    This proves \cref{thrm:DerdziConfKahler}. 
\end{proof}

%

\subsection{Killing Vectors in Einstein one-sided type D spaces}

Another statement proven in \cite{SelfDualKahleDerdzi1983} is about the existence of Killing vectors in every Einstein manifold satisfying the assumptions of \cref{thrm:DerdziConfKahler}. 
\begin{proposition} \label{prop:ConfKahlerKillingVector}
    Let $(M,\Sigma^i)$ be an Einstein manifold with type D self-dual Weyl tensor. Let $\Sigma^i$ be a basis for the triple of self-dual 2-forms chosen so that the SD part of the Weyl curvature is diagonal, and $\Sigma^1$ is in the direction of the special (non-repeated) eigenvalue. Then $X^\mu = \Sigma^{1\mu \nu} \nabla_\nu \alpha^{-\frac{1}{3}}$ is a Killing vector field. Here $\alpha$ with $\alpha^2 = \frac{1}{6}\Tr{\Psi^2}$ is the repeated eigenvalue of the self-dual Weyl tensor.
\end{proposition}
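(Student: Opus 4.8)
The plan is to verify that $X$ satisfies the $\Sigma$-Killing equation $\mathcal{L}_X\Sigma^i=\epsilon^{ijk}\tilde\theta^j\Sigma^k$; by the Proposition of Section~\ref{sec:SU2_Killing_Vectors} this implies that $X$ is a Killing vector for the metric the $\Sigma^i$ define. First I would compute the contractions $\iota_X\Sigma^i$ using only the quaternion algebra \eqref{eq:Sigma_Quaternion_AlgebraSigma_Quaternion_Algebra}. Writing $f:=\alpha^{-1/3}$ and letting $\nabla f$ denote the gradient vector field of $f$, a short index computation yields the key facts that $\iota_X\Sigma^1=df$ is \emph{exact}, that $\iota_X\Sigma^2=-\iota_{\nabla f}\Sigma^3$ and $\iota_X\Sigma^3=\iota_{\nabla f}\Sigma^2$, and that $\iota_X d\alpha=X^\mu\partial_\mu\alpha\propto\Sigma^{1\mu\nu}\partial_\mu\alpha\,\partial_\nu\alpha=0$, so $X$ preserves $\alpha$.

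Next I would dispose of the $i=1$ component. By Cartan's formula and $\iota_X\Sigma^1=df$, one has $\mathcal{L}_X\Sigma^1=\iota_X d\Sigma^1$; the first equation in \eqref{eq:Explicit_Divergence_Equations} (equivalently, closedness of $\omega=\alpha^{2/3}\Sigma^1$, \eqref{eq:ConfKahler_Closed_2form_Equation}) gives $d\Sigma^1=-\tfrac23\,d\ln\alpha\wedge\Sigma^1$, so $\iota_X d\Sigma^1=-\tfrac23\big[(\iota_X d\ln\alpha)\,\Sigma^1-d\ln\alpha\wedge\iota_X\Sigma^1\big]=0$ because $\iota_X d\alpha=0$ and $\iota_X\Sigma^1=df$ is a multiple of $d\alpha$. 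Hence $\mathcal{L}_X\Sigma^1=0$, which forces $\tilde\theta^2=\tilde\theta^3=0$, and it remains only to prove $\mathcal{L}_X(\Sigma^2+i\Sigma^3)=i\,\tilde\theta^1\,(\Sigma^2+i\Sigma^3)$ for some function $\tilde\theta^1$. It is worth recording the geometric content of what has been shown: since $\iota_X\omega=\alpha^{2/3}\,df=-d(\alpha^{1/3})$ is exact, $X$ is the Hamiltonian vector field of $-\alpha^{1/3}$ with respect to the Kähler form $\omega$ of the Kähler metric $g_K=\alpha^{2/3}g_\Sigma$ produced in \cref{thrm:DerdziConfKahler}; in particular $\mathcal{L}_X\omega=0$, and since $\mathcal{L}_X\alpha=0$ the field $X$ is Killing for $g_\Sigma$ if and only if it is Killing for $g_K$, if and only if it is holomorphic, i.e. preserves the complex structure $J_1$ cut out by $\Sigma^2+i\Sigma^3$.

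For the remaining complex equation I would feed the structure equation \eqref{d-Omega}, rescaled to $d(\Sigma^2+i\Sigma^3)=-(iA^1+\tfrac13 d\ln\alpha)\wedge(\Sigma^2+i\Sigma^3)$, into Cartan's formula. Using $\iota_X(\Sigma^2+i\Sigma^3)=i\,\iota_{\nabla f}(\Sigma^2+i\Sigma^3)$ together with $\iota_X d\alpha=0$, the wedge-product terms arising from $\iota_X d(\Sigma^2+i\Sigma^3)$ cancel those in $d\,\iota_X(\Sigma^2+i\Sigma^3)$, leaving $\mathcal{L}_X(\Sigma^2+i\Sigma^3)=(\text{function})\cdot(\Sigma^2+i\Sigma^3)+i\,\mathcal{L}_{\nabla f}(\Sigma^2+i\Sigma^3)$; the surviving obstruction is that the $(1,1)$-part with respect to $J_1$ of $\mathcal{L}_{\nabla f}(\Sigma^2+i\Sigma^3)$ vanishes, equivalently that $\mathrm{Hess}_{g_K}(\alpha^{1/3})$ is $J_1$-invariant. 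I expect this to be the main obstacle, and it is the one place where one needs more than the divergence-free hypothesis of \cref{thrm:DerdziConfKahler}: here one uses the full Einstein equations $F^i=(\Psi^{ij}+\tfrac{\Lambda}{3}\delta^{ij})\Sigma^j$. Concretely I would differentiate the structure equations \eqref{eq:dSigma_Equations_123} and \eqref{d-Omega-Sigma-1} once more and substitute $dA^i=(\Psi^{ij}+\tfrac{\Lambda}{3}\delta^{ij})\Sigma^j-\tfrac12\epsilon^{ijk}A^j\wedge A^k$ to extract the required second-order identity for $\alpha$. Equivalently — and this is the conceptually cleanest way to see it — this is just the classical conformal-to-Einstein equation, which on the Kähler side says that the trace-free Hessian of the conformal factor $\alpha^{1/3}$ is proportional to the ($J_1$-invariant, since $g_K$ is Kähler) trace-free Ricci tensor of $g_K$, hence $J_1$-invariant; thus $\alpha^{1/3}$ is a Killing potential for $g_K$ and $X$ is Killing.
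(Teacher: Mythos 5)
Your proposal is correct, and its opening moves coincide with the paper's: you compute $\iota_X\Sigma^1=d\alpha^{-1/3}$ from the quaternion algebra, observe $\iota_Xd\alpha=0$ by antisymmetry, and reduce everything to checking the $\Sigma$-Killing condition of Section~\ref{sec:SU2_Killing_Vectors} on $\Sigma^1$ and on $\Sigma^2+i\Sigma^3$. Where you genuinely diverge is the one step that needs the full Einstein equations. The paper stays entirely inside the chiral formalism: using (\ref{d-Omega-Sigma-1}) it finds $\iota_X\Sigma^+=-i\alpha^{-1/3}A^+$, so that $d^A\iota_X\Sigma^+$ literally assembles into $-i\alpha^{-1/3}F^+$, and the Einstein equation $F^+=(-\alpha+\tfrac{\Lambda}{3})\Sigma^+$ immediately puts this in the required form $i\theta\,\Sigma^+$, with the gauge parameter $\theta=\alpha^{-1/3}(\alpha-\tfrac{\Lambda}{3})$ coming out explicitly. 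You instead pass to the K\"ahler side: $X$ is (up to sign) the Hamiltonian vector field of $\alpha^{1/3}$ for $\omega$, the conformal-to-Einstein equation forces the trace-free Hessian of $\alpha^{1/3}$ in $g_K$ to be proportional to the trace-free Ricci of $g_K$, which is $J$-invariant because $g_K$ is K\"ahler, so $\alpha^{1/3}$ is a Killing potential; together with $\mathcal{L}_X\alpha=0$ this gives Killing for $g_\Sigma$. This is a complete and valid argument --- it is essentially Derdzi\'nski's original route --- and it is conceptually more transparent about \emph{why} the statement is true (Killing potentials on K\"ahler manifolds). What it costs is exactly what the paper is trying to avoid: you must import the conformal transformation law of the Ricci tensor and the K\"ahler identity $\mathrm{Ric}(J\cdot,J\cdot)=\mathrm{Ric}(\cdot,\cdot)$ as external facts, whereas the paper's computation is self-contained in the Plebanski variables and, as a bonus, identifies the rotation parameter $\theta^i$ explicitly. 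The only soft spot in your write-up is the middle of the last paragraph, where the cancellation of the wedge terms in $\mathcal{L}_X(\Sigma^2+i\Sigma^3)$ is asserted rather than carried out; but since you then close the argument by the conformal-to-Einstein route, which does not depend on that intermediate bookkeeping, this does not constitute a gap.
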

\begin{proof}
    Inserting $X$ into $\Sigma^1$ we find
    \begin{align}
        (\iota_X \Sigma^1)_\mu = X^\nu \Sigma^1_{\nu \mu} = \Sigma^{1 \nu \rho} \nabla_\rho \alpha^{-\frac{1}{3}} \Sigma^1_{\nu \mu} = \nabla_\mu \alpha^{-\frac{1}{3}} \quad \Rightarrow \quad \iota_X \Sigma^1 = d\alpha^{-\frac{1}{3}}
    \end{align}
    where we have used the quaternion algebra of the 2-forms. The other two 1-forms are computed as
    \begin{align}
    (\iota_X \Sigma^2)_\mu = \Sigma^{1\sigma\rho} \nabla_\rho \alpha^{-\frac{1}{3}} \Sigma^2_{\sigma\mu}= \Sigma^1_\mu{}^\sigma \Sigma^2_\sigma{}^\rho \nabla_\rho \alpha^{-\frac{1}{3}},
      \end{align}
      where we have used the algebra of $\Sigma$'s to anti-commute $\Sigma^{1,2}$. We similarly have
     \begin{align}
     (\iota_X \Sigma^3)_\mu = \Sigma^1_\mu{}^\sigma \Sigma^3_\sigma{}^\rho \nabla_\rho \alpha^{-\frac{1}{3}}.
     \end{align}
     Introducing $\Sigma^+ = \Sigma^2+ i \Sigma^3$ we have
      \begin{align}\label{i-X-Sigma-plus}
      (\iota_X \Sigma^+)_\mu = \Sigma^1_\mu{}^\sigma \Sigma^+_\sigma{}^\rho \nabla_\rho \alpha^{-\frac{1}{3}}.
      \end{align}
      We now use (\ref{d-Omega-Sigma-1}), which we rewrite as 
      \begin{align}
      d\alpha^{-1/3} \Sigma^+ = i \alpha^{-1/3} A^+ \Sigma^1,
      \end{align}
      where we introduced $A^+=A^2+ iA^3$. Taking the Hodge dual we get
       \begin{align}
      \Sigma^+_\mu{}^\rho \nabla_\rho \alpha^{-1/3}  = i \alpha^{-1/3} \Sigma^1_\mu{}^\rho A^+_\rho.
      \end{align}
      Using this in (\ref{i-X-Sigma-plus}) we get
       \begin{align}
      (\iota_X \Sigma^+)_\mu =  i \alpha^{-1/3} \Sigma^1_\mu{}^\sigma  \Sigma^1_\sigma{}^\rho A^+_\rho = -i \alpha^{-1/3}A^+_\mu.
      \end{align}
      We now summarise the above results
      \begin{align}\label{i-X-Sigma-plus*}
      \iota_X \Sigma^1 = d \alpha^{-1/3}, \qquad \iota_X \Sigma^+= -i \alpha^{-1/3}A^+, \qquad  \qquad \iota_X \Sigma^-= i \alpha^{-1/3}A^-.
 \end{align}
 
  Having computed the components of $\iota_X \Sigma^i$, we can compute its exterior covariant derivative and use the characterisation of the $\Sigma$-Killing vectors to show that $X$ is a Killing vector field. It will be convenient to rewrite the formulas for the exterior covariant derivative of $\iota_X \Sigma^i$ in terms of the introduced objects $\Sigma^\pm, A^\pm$. We have
   \begin{align}
  d_A \iota_X \Sigma^1 = d \iota_X \Sigma^1 + \frac{1}{2i} (A^- \iota_X \Sigma^+ - A^+ \iota_X \Sigma^-), \\ \nonumber
  d_A \iota_X\Sigma^+ = d  \iota_X \Sigma^+ - i A^+ \iota_X \Sigma^1 + i A^1 \iota_X \Sigma^+.
    \end{align}
    Substituting (\ref{i-X-Sigma-plus*}) into the first line we see it vanishes $d_A \iota_X \Sigma^1=0$. For the first term this is manifest from $d^2=0$. For the second term there is a cancellation of the two $A^+ A^-$ terms. For the second line we have
      \begin{align}
     d_A \iota_X\Sigma^+ = -i d( \alpha^{-1/3} A^+)  - i A^+ d \alpha^{-1/3} +  A^1 \alpha^{-1/3}A^+ = - i \alpha^{-1/3} (dA^+ +i A^1 A^+).
      \end{align}
What arises here is the curvature 
 \begin{align}
 F^+ = F^2 +i  F^3 = dA^+ +i A^1 A^+.
 \end{align}
 Using the Einstein equations in the form (\ref{Einstein}) we have
 \begin{align}
 F^+ = (-\alpha + \frac{\Lambda}{3}) \Sigma^+.
\end{align}
 This means that all in all
 \begin{align}
 d_A \iota_X\Sigma^1=0, \qquad d_A \iota_X\Sigma^+ = -i \alpha^{-1/3}(-\alpha + \frac{\Lambda}{3}) \Sigma^+.
 \end{align}
    
    On the other hand, the $\Sigma$-Killing condition with the vector $\theta^i=(\theta,0,0)$ becomes
    \begin{align}
    d_A \iota_X \Sigma^1 = 0, \qquad d_A \iota_X\Sigma^+= i \theta \Sigma^+.
     \end{align}
     We see that this is indeed satisfied, with
     \begin{align}
     \theta = \alpha^{-1/3}(\alpha - \frac{\Lambda}{3}).
     \end{align}
     This finishes the proof. 
   \end{proof}
   
   \section{Two complex structures of Euclidean Kerr}
   \label{sec:Kerr}
   
The aim of this section is to exhibit the structures described in the previous section for the example of the Euclidean Kerr metric. This section serves only to motivate the considerations that follow. 

\subsection{Metric and Chiral objects}

The Euclidean Kerr metric is given by
\begin{align}
    g = \left( 1- \frac{2Mr}{\rho^2} \right) dt^2 + \frac{\rho^2}{\Delta} dr^2 + \rho^2 d\theta^2 + \left( r^2 -a^2 - \frac{2Mr a^2}{\rho^2} \sin(\theta)^2 \right) \sin(\theta)^2 d\phi^2 + \frac{4Mra\sin(\theta)^2}{\rho^2} dt d\phi,
    \label{eq:EuclideanKerrMetric}
\end{align}
where 
\begin{align}\label{rho-delta}
\rho^2 = r^2 - a^2 \cos(\theta)^2, \qquad \Delta = r^2 - 2Mr - a^2.
\end{align}
This is obtained from the usual Lorentzian Kerr metric in Boyer-Lindquist coordinates through a transformation $t \rightarrow it$, $a \rightarrow ia$.
A useful tetrad basis is given by
\begin{align}
    e^0 = \frac{\sqrt{\Delta}}{\rho}(dt - a \sin(\theta)^2 d\phi), \quad e^1 = \frac{\rho}{\sqrt{\Delta}}dr,\quad e^2 = \rho d\theta,\quad e^3 = \frac{\sin(\theta)}{\rho}(a dt + (r^2 - a^2) d\phi).\label{eq:KerrTetradBasis}
\end{align}

\subsection{Self-dual objects}

The self-dual 2-forms are readily calculated from the \cref{eq:SelfDualTwoForms_From_Tetrad} using the tetrad basis.
We obtain 
\begin{align}
    \Sigma^1 &= (dt-a\sin(\theta)^2 d\phi) \wedge dr - \sin(\theta) d\theta \wedge (a dt + (r^2-a^2) d\phi)\\
    \Sigma^2 &= \sqrt{\Delta}(dt-a\sin(\theta)^2 d\phi) \wedge d\theta -\frac{\sin(\theta)}{\sqrt{\Delta}}(a dt + (r^2-a^2) d\phi) \wedge dr \\
    \Sigma^3 &= \sqrt{\Delta} \sin(\theta) dt \wedge d\phi - \frac{\rho^2}{\sqrt{\Delta}} dr \wedge d\theta.
\end{align}
The self-dual connections obtained by solving $d^A \Sigma^i = d\Sigma^i + \epsilon^{ijk} A^j \wedge \Sigma^k = 0$ are the following 
\begin{align}
    A^1 &= \frac{M}{z_+^2} (dt - a\sin(\theta)^2 d\phi) + \frac{r\cos(\theta) - a}{z_+} d\phi \\
    A^2 &= -\frac{\sqrt{\Delta}\sin(\theta)}{z_+} d\phi \\
    A^3 &= -\frac{a\sin(\theta)}{\sqrt{\Delta} z_+} dr + \frac{\sqrt{\Delta}}{z_+} d\theta
\end{align}
where $z_+ = r-a \cos(\theta)$.
The self-dual curvatures can be computed (by hand!) and are given by
\begin{align}
    F^1 &= \frac{2M}{z_+^3}  \Sigma^1,\quad F^2 = -\frac{M}{z_+^3} \Sigma^2, \quad F^3 = -\frac{M}{z_+^3} \Sigma^3.
\end{align}
This verifies that Kerr is Ricci-flat, and also gives the components of the self-dual Weyl tensor $F^i = \Psi^{ij} \Sigma^j$ 
\begin{align}
    \Psi^{ij}_+ = \begin{pmatrix}
        2\alpha_+ & 0 & 0 \\
        0 & -\alpha_+ & 0 \\
        0 & 0 & -\alpha_+
    \end{pmatrix}, \quad \alpha_+ = \frac{M}{z_+^3}.
\end{align}
The matrix $\Psi^{ij}$ of the self-dual Weyl is tracefree as expected. Note that we have just used Plebanski formalism to verify that the Kerr metric is Ricci flat. This computation is doable by hand! This once again demonstrates the usefulness and power of the Plebanski formalism. 

\subsection{Anti-self-dual objects}

The anti-self-dual 2-forms can also be calculated using \cref{eq:AntiSelfDualTwoForm_From_Tetrad} to find
\begin{align}
    \bar{\Sigma}^1 &= (dt - a\sin(\theta)^2 d\phi) \wedge dr + \sin(\theta) d\theta \wedge ( a dt + (r^2-a^2) d\phi ) \\
    \bar{\Sigma}^2 &= \sqrt{\Delta}(dt-a\sin(\theta)^2 d\phi) \wedge d\theta +\frac{\sin(\theta)}{\sqrt{\Delta}}(a dt + (r^2-a^2) d\phi) \wedge dr \\
    \bar{\Sigma}^3 &= \sqrt{\Delta} \sin(\theta) dt \wedge d\phi + \frac{\rho^2}{\sqrt{\Delta}} dr \wedge d\theta.
\end{align}
The corresponding anti-self-dual connections are 
\begin{align}
    \bar{A}^1 &= -\frac{M}{z_-^2} (dt-a\sin(\theta)^2 d\phi) + \frac{r\cos(\theta) + a}{z_-} d\phi \\
    \bar{A}^2 &= -\frac{\sqrt{\Delta} \sin(\theta)}{z_-} d\phi\\
    \bar{A}^3 &= \frac{a \sin(\theta)}{\sqrt{\Delta} z} dr + \frac{\sqrt{\Delta}}{z_-} d\theta.
\end{align}
Here $z_-=r+a\cos(\theta)$.
The anti-self-dual part of the curvature is 
\begin{align}
    \bar{F}^1 = -\frac{2M}{z_-^3} \bar{\Sigma}^1, \quad \bar{F}^2 = \frac{M}{z_-^3} \bar{\Sigma}^2, \quad \bar{F}^3 = \frac{M}{z_-^3} \bar{\Sigma}^3.
\end{align}
Clearly the anti-self-dual Weyl tensor is 
\begin{align}
    \Psi_-^{ij} = \begin{pmatrix}
        2\alpha_- & 0 & 0 \\
        0 & -\alpha_- & 0 \\
        0 & 0 & -\alpha_-
    \end{pmatrix}, \quad \alpha_- = -\frac{M}{z_-^3}
\end{align}
Note that the anti-self-dual and self-dual objects are related by the transformation $t\rightarrow -t$, $a \rightarrow -a$, which is the Euclidean analogue of the complex conjugation in the Lorentzian signature.

\subsection{Conformal to K\"ahler}

Looking at both halves of the Weyl curvature we can see that Euclidean Kerr is of type $D\otimes D$. Then by Derdzi\'nski's theorem we know Euclidean Kerr is conformal to two different K\"ahler metrics.
The two conformal factors are 
\begin{align}
    \lambda_\pm = \frac{1}{z_\pm} \simeq \alpha_\pm^{\frac{1}{3}} \label{eq:KerrConformalFactors}
\end{align}
Conformally transforming the 2-forms $\Sigma^1$ and $\bar{\Sigma}^1$ we find a pair of closed 2-forms 
\begin{align}
    \omega_+ &= \frac{1}{z_+^2} \Sigma^1 = -dt \wedge d\left( \frac{1}{r-a\cos(\theta)} \right) + d\phi \wedge d\left( \frac{a-r\cos(\theta)}{r-a\cos(\theta)} \right) \\[10pt]
    \omega_- &= \frac{1}{z_-^2} \bar{\Sigma}^1 = dt \wedge d\left( \frac{1}{r+a\cos(\theta)} \right) - d\phi \wedge d \left( \frac{a + r\cos(\theta)}{r+a\cos(\theta)} \right).
\end{align}
The 2-forms $\omega_\pm$ are the K\"ahler forms of the two K\"ahler metrics.

\subsection{Complex Structures}

We can extract the complex structures from the metric and the corresponding K\"ahler forms.
It is useful to introduce a basis for $TM$ dual to the tetrad basis in \cref{eq:KerrTetradBasis}, these are given by
\begin{align}
    e_0 = \frac{r^2-a^2}{\rho \sqrt{\Delta}} \partial_t - \frac{a}{\rho \sqrt{\Delta}}  \partial_\phi, \quad e_1 = \frac{\sqrt{\Delta}}{\rho} \partial_r, \quad e_2 = \frac{1}{\rho}\partial_\theta, \quad e_3 = \frac{1}{\rho} \left( a \sin(\theta) \partial_t + \frac{1}{\sin(\theta)} \partial_\phi \right).
\end{align} 
These are dual to the tetrads in the sense that $e_I (e^J) = e^\mu_I e^J_\mu = \delta_I^J$.
The complex structures can be calculated at the level of the Einstein metric as they are conformally invariant.
Hence, we look for operators $J_\pm$ that satisfy $g(\cdot,J_\pm \cdot) = \lambda^{-2}_\pm \omega_\pm$.
Written in terms of tetrads and their duals this is simply
\begin{align}
    J_\pm &= e^1 \otimes e_0 - e^0 \otimes e_1 \mp e^3 \otimes e_2 \pm e^2 \otimes e_3.
\end{align}
It is not difficult to see that $J^2_\pm = - \id$. 
Substituting the expression in Boyer-Lindquist coordinates we find
\begin{align}
    J_\pm = &\frac{1}{\Delta} dr \otimes \left( (r^2-a^2) \partial_t  - a \partial_\phi \right) - \frac{\Delta}{\rho^2}(dt-a \sin(\theta)^2 d\phi) \otimes \partial_r \\
    & \mp \frac{\sin(\theta)}{\rho^2}(adt + (r^2-a^2)d\phi) \otimes \partial_\theta \pm d\theta \otimes \left( a \sin(\theta) \partial_t + \frac{1}{\sin(\theta)} \partial_\phi \right).
\end{align}
We have explicitly checked that these complex structures are integrable (by computing and checking that their Nijenhuis tensor vanishes) and that they commute. 

\subsection{Killing Vectors}

The Euclidean Kerr metric is conformal to two different K\"ahler metrics with opposite orientations. The general statements of the previous section imply that there should be two Killing vector fields, one coming from each of the two orientations. These are given by 
\begin{align}
    X_\pm^\mu = J_\pm^\mu{}_\nu (d\lambda_\pm^{-1})^\nu
\end{align}
Plugging in the conformal factors, \cref{eq:KerrConformalFactors}, we calculate the intermediate objects
\begin{align}
    d\lambda_\pm^{-1} = dr \pm a \cos(\theta) d\theta.
\end{align}
The vector fields dual to the 1-forms are given by 
\begin{align}
    (d\lambda_\pm^{-1})^\sharp = \frac{\Delta}{\rho^2} \partial_r \pm \frac{a \sin(\theta)}{\rho^2} \partial_\theta
\end{align}
We get the Killing vectors by applying the complex structures
\begin{align}
    X_\pm = J_\pm (d\lambda_\pm^{-1})^\sharp = -\partial_t.
\end{align}
Thus, we find that the two Killing vectors coming from the two orientations coincide in the case of Kerr. Thus, the Euclidean Kerr is covered by the case (iii) of Proposition 11 from \cite{AmbitoricGeomeAposto2013}. We note that the property that these two Killing vector fields coincide is related to the fact that the Kerr Laplace operator is separable. Indeed, this property holds only when one of the parameters of the Pleba\'nski-Demia\'nski solution is zero $\epsilon=0$, see below for the definition of $\epsilon$. This is also when the Laplace operator is separable. For more on these issues see e.g. \cite{Gauduchon_2017}.

Even though we see that for Kerr metric there is just one Killing vector field coming from the general construction of Proposition \ref{prop:ConfKahlerKillingVector}, there is still another Killing vector field $\partial_\phi$. The two Killing vector fields $\xi_1=\partial_t, \xi_2=\partial_\phi$ commute, and they also span an isotropic subspace $\omega_\pm(\xi_1,\xi_2)=0$ with respect to either of the K\"ahler forms.
It is also obvious that each of these two Killing vector fields is $J_\pm$-holomorphic, in the sense that ${\mathcal L}_{\xi_{1,2}} J_\pm=0$. This is manifest in the case of the Kerr metric, because all the geometric objects constructed for it are $t,\phi$ independent. We will take these properties of $\xi_1,\xi_2$ as the starting point of the derivation of the Kerr metric in the next section. 

For completeness, let us also compute the result of the action of the complex structures $J_\pm$ on $\xi_{1,2}$. We have 
\begin{gather}
    J_\pm(\partial_t) = \frac{\Delta}{\rho^2}\partial_r \pm \frac{a \sin(\theta)}{\rho^2} \partial_\theta \\
    J_\pm(\partial_\phi) = -\frac{a \sin(\theta)^2\Delta}{\rho^2} \partial_r \pm \frac{(r^2-a^2)\sin(\theta)}{\rho^2} \partial_\theta
\end{gather}
It is clear that $\xi_1,\xi_2, J(\xi_1),J(\xi_2)$ span all of $TM$, for each of the two complex structures. It can also be checked that all four of these vector fields are mutually commuting, for either $J$. The resulting frame will play an important role in the derivation of the next section.

\section{Toric double-sided type D metrics}\label{sec:DerivationOfTypeDSpacetimes}

We now proceed with a derivation of the Euclidean Kerr metric. In fact, our discussion will be more general, and include the more general Pleba\'nski-Demia\'nski family of metrics. 
The Pleba\'nski-Demia\'nski (PD) spacetimes \cite{RotatingChargPleban1976,ANewLookAtTGriffi2005} are Lorentzian algebraically special (type D) spacetimes that generalise many
important GR solutions. In general the PD spacetimes are solutions to the Einstein-Maxwell equations, however, here we will only consider the subclass of metrics that satisfy the vacuum Einstein equations. Our aim is to provide a "simple" derivation of this class of metrics by building on the conformal to K\"ahler ideas described in the previous section. 
Trying to setup Einstein's equation and solve them by brute force is doable but is far from the best strategy, see \cite{TheMathematicaChandr1998, PhysicallyMotiBaines2022, APossibleIntuNikoli2012} for an example of this. Instead, we follow the approach of \cite{Aposto2003,AmbitoricGeomeAposto2013} that is based on the availability of two different commuting complex structures in these spaces. 

\subsection{Properties of PD Spacetimes}

We begin by detailing properties of the spacetimes we aim to derive.
We are interested in Lorentzian signature spacetimes that are algebraically special (type D), although we will solve for their Euclidean analytic continuations.
For example, to obtain a Euclidean metric from the Kerr metric one sends $t \rightarrow it$ and $a \rightarrow ia$, where $t$ and $a$ are the time coordinate and the angular momentum parameter respectively. The self-dual (SD) and anti-self-dual (ASD) Weyl tensor in Lorentzian spacetimes are complex conjugates of each other, i.e. $W^+ = (W^-)^*$.
This is a consequence of the fact that SD 2-forms are complex-valued objects and the ASD 2-forms are complex conjugates of their SD counterparts.
Unlike the Euclidean signature case, there are no one sided type D metrics in Lorentzian signature. Thus, analytically continuing a type D Lorentzian metric to Euclidean signature we get a Riemannian metric that is type D with respect to both $W^\pm$. 

We are looking for Einstein metrics, for which we know both $W^\pm$ are divergence free. We can then resort to Derdzi\'nski's theorem \cref{thrm:DerdziConfKahler} which implies that the metric is conformal to two K\"ahler metrics, in general not coinciding.  We denote these two K\"ahler metrics by $g_\pm$, so that $g = \lambda^2_\pm g_\pm$ where $g$ is the Einstein metric of interest. Being conformal to two K\"ahler metrics implies the existence of two integrable almost complex structures $J_\pm$. Coming from the SD, ASD sectors respectively, these two complex structures have different orientations, and are not coinciding. Moreover, in 4D, two complex structures of different orientations commute. Having two commuting complex structures is an extremely strong property.  

By the discussion in the previous section, one-sided type D Einstein metrics are equipped with a Killing vector, see \cref{prop:ConfKahlerKillingVector}. In our situation of double-sided type D metrics we have two Killing vector fields of this type. However, they may coincide, which, as we have seen in the previous section, is indeed the situation in Kerr. However, the metric we want to reproduce has two commuting Killing vector fields, one related to its stationary property, the other related to it being axisymmetric. So, we will assume that there are two commuting Killing vectors $\xi_1,\xi_2$. In the Euclidean setting both of these Killing vectors have compact orbits. This means that we have the two-dimensional torus $\mathbb{T}^2$ acting on our space by isometries. Such spaces are called toric. Moreover, in our setup with two different K\"ahler metrics $g_\pm$, we have the action of $\mathbb{T}^2$ on both $g_\pm$. This is the reason why these spaces are called ambitoric in \cite{AmbitoricGeomeAposto2013}.

The final condition we impose is that the Killing vectors $\xi_1,\xi_2$ are $J$-holomoprhic with respect to both $J_\pm$, that is ${\mathcal L}_{\xi_{1,2}} J_\pm=0$, and that $\omega_\pm(\xi_1,\xi_2)=0$. We have seen that this is true in the case of Kerr, and so this is a well geometrically-motivated assumption. Having motivated the geometry of the problem, we are ready to convert the assumptions into a concrete metric ansatz. 

\subsection{A frame of commuting vector fields}

Let us consider just one of the two complex structures for the moment. The first statement that we would like to prove is that the collection of vector fields $\xi_1,\xi_2,J(\xi_1),J(\xi_2)$ spans $TM$ and mutually commutes. The first half of the statement follows from the fact that $\xi_1,\xi_2$ span an isotropic subspace $\omega(\xi_1,\xi_2)=0$. This means that $J(\xi_1),J(\xi_2)$ is the complementary subspace. 

To prove commutativity we need to use both J-holomorphicity, as well as integrability of $J$. Let us start by giving an equivalent way of stating J-holomorphicity. This is a standard statement in complex geometry.
\begin{lemma} A vector field $X$ is $J$-holomorphic ${\mathcal L}_X J=0$ if and only if $[X,JY]=J[X,Y], \forall Y\in TM$. 
\end{lemma}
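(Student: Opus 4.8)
The plan is to recall the definition of the Lie derivative of a $(1,1)$-tensor and unpack it against an arbitrary vector field. For a tensor $J$ of type $(1,1)$ acting on vectors, one has the standard identity $(\mathcal{L}_X J)(Y) = \mathcal{L}_X(JY) - J(\mathcal{L}_X Y) = [X, JY] - J[X,Y]$ for all $Y \in TM$. So the content of the lemma is essentially this one algebraic identity, and the "if and only if" is immediate once it is written down: $\mathcal{L}_X J = 0$ as a tensor means $(\mathcal{L}_X J)(Y) = 0$ for every $Y$, which by the identity is exactly $[X, JY] = J[X,Y]$ for every $Y$.

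Concretely, I would first state that $\mathcal{L}_X$ acts on the $(1,1)$-tensor $J$ so that for any vector field $Y$,
\begin{align}
(\mathcal{L}_X J)(Y) = \mathcal{L}_X\big(J(Y)\big) - J\big(\mathcal{L}_X Y\big).
\end{align}
Then I would substitute $\mathcal{L}_X Z = [X,Z]$ for the Lie derivative of a vector field $Z$, giving
\begin{align}
(\mathcal{L}_X J)(Y) = [X, JY] - J[X,Y].
\end{align}
From here the two directions are trivial: if $\mathcal{L}_X J = 0$ then the left side vanishes for all $Y$, yielding $[X,JY] = J[X,Y]$; conversely, if $[X,JY] = J[X,Y]$ holds for all $Y$, then $(\mathcal{L}_X J)(Y) = 0$ for all $Y$, so the tensor $\mathcal{L}_X J$ vanishes identically.

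The only point that deserves a word of care — and the closest thing to an "obstacle," though it is really just a bookkeeping remark — is justifying the first displayed identity, i.e. that the Lie derivative satisfies the Leibniz rule with respect to the natural pairing of a $(1,1)$-tensor with a vector field. This is the general fact that $\mathcal{L}_X$ is a derivation commuting with contractions; viewing $J(Y)$ as the contraction of $J \otimes Y$, one gets $\mathcal{L}_X(J(Y)) = (\mathcal{L}_X J)(Y) + J(\mathcal{L}_X Y)$. I would simply cite this as the standard derivation property of the Lie derivative rather than re-derive it, since the paper treats such differential-geometric facts as background. No Plebanski-specific input is needed here; the lemma is purely a statement about Lie derivatives and is included only to set up the commutativity argument for the frame $\xi_1, \xi_2, J(\xi_1), J(\xi_2)$ that follows.
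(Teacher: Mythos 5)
Your proposal is correct and follows exactly the same route as the paper: apply the Leibniz rule $\mathcal{L}_X(JY) = (\mathcal{L}_X J)(Y) + J(\mathcal{L}_X Y)$, substitute $\mathcal{L}_X Z = [X,Z]$, and read off the equivalence. The extra remark justifying the Leibniz rule via the derivation/contraction property is a fine (if optional) addition; the paper simply asserts the identity.
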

\begin{proof} We have
\begin{align}
{\mathcal L}_X JY = ({\mathcal L}_X J)(Y) + J({\mathcal L}_X Y) \qquad \Rightarrow \qquad ({\mathcal L}_X J)(Y) = [X,JY]- J[X,Y].
\end{align}
This shows that $({\mathcal L}_X J)=0$ if and only if $[X,JY]- J[X,Y]=0$ for any $Y$.
\end{proof}

We can now use this to show that most of the vector fields in $\xi_1,\xi_2,J(\xi_1),J(\xi_2)$ commute. 
\begin{lemma} Let $[\xi_1,\xi_2]=0$ and $\xi_{1,2}$ be $J$-holomorphic. Then 
\begin{align} 
[\xi_1, J(\xi_1)] =0, \quad [\xi_1, J(\xi_2)] =0 \quad [\xi_2, J(\xi_1)] =0  \quad [\xi_2, J(\xi_2)] =0.
\end{align}
\end{lemma}
\begin{proof} This is obvious from the $J$-holomorphicity of $\xi_{1,2}$ and the previous lemma. 
\end{proof}

To prove that $J(\xi_1),J(\xi_2)$ commute, we need to show that the integrability of $J$ implies that they are also $J$-holomorphic. 
\begin{lemma} If $J$ is integrable, and $X$ is a $J$-holomorphic vector field, then $J(X)$ is also $J$-holomoprhic.
\end{lemma}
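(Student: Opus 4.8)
The plan is to use the Nijenhuis tensor characterization of integrability together with the reformulation of $J$-holomorphicity from the previous lemma. Recall that $J$ is integrable precisely when its Nijenhuis tensor vanishes,
\begin{align}
N_J(Y,Z) := [JY, JZ] - J[JY,Z] - J[Y,JZ] - [Y,Z] = 0 \qquad \forall\, Y,Z \in TM.
\end{align}
By the earlier lemma, saying that $X$ is $J$-holomorphic is the same as saying $[X, JY] = J[X,Y]$ for all $Y$, and I must prove the analogous identity with $X$ replaced by $JX$, namely $[JX, JY] = J[JX, Y]$ for all $Y$.

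First I would write down $N_J(X, Y) = 0$ explicitly:
\begin{align}
[JX, JY] = J[JX, Y] + J[X, JY] + [X, Y].
\end{align}
Now I would substitute the $J$-holomorphicity of $X$ into the middle term on the right: since $[X, JY] = J[X,Y]$, we get $J[X,JY] = J^2[X,Y] = -[X,Y]$, using $J^2 = -\mathbb{I}$. Therefore the last two terms on the right cancel,
\begin{align}
[JX, JY] = J[JX, Y],
\end{align}
which is exactly the statement that $JX$ is $J$-holomorphic, again via the equivalence from the previous lemma. That completes the argument.

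I do not expect any real obstacle here; the only point requiring a little care is the bookkeeping in the Nijenhuis identity — getting the signs and the placement of $J$ right, and making sure the term that gets rewritten via $J$-holomorphicity is indeed $[X, JY]$ and not $[JX, Y]$ (the latter is what we are trying to control, so rewriting it would be circular). Once the cancellation $J[X,JY] + [X,Y] = 0$ is spotted, the conclusion is immediate. One should also note that the argument is symmetric enough that it does not matter which slot of $N_J$ we feed $X$ into, since $N_J$ is antisymmetric.
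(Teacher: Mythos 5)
Your proof is correct and follows exactly the same route as the paper: apply the vanishing of the Nijenhuis tensor $N_J(X,Y)$ and use $J$-holomorphicity of $X$ to cancel the $J[X,JY]$ and $[X,Y]$ terms, leaving $[JX,JY]=J[JX,Y]$. No issues.
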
 
\begin{proof}
$J$ is integrable if and only if the Nijenhuis tensor 
\begin{align}
N_J(X,Y) = [JX,JY]-J[X,JY]-J[JX,Y]-[X,Y]
\end{align}
vanishes. Let us assume that $X$ here is $J$-holomoprhic. Then the second and the fourth terms annihilate each other, and the vanishing of $N_J$ implies
\begin{align}
[JX,JY]=J[JX,Y],
\end{align}
which is precisely the statement that $JX$ is $J$-holomorphic.
\end{proof}

Now, with both $J(\xi_1),J(\xi_2)$ being $J$-holomorphic, it is clear that $[J(\xi_1),J(\xi_2)]=J^2 [\xi_1,\xi_2]=0$. We have established that $\xi_1,\xi_2,J(\xi_1),J(\xi_2)$ are mutually commuting. 

We end this subsection with another simple lemma.
\begin{lemma}
    Let $\xi_{1,2,3,4}$ span $TM$ and all commute. Let $\theta_{1,2,3,4}$ be the dual 1-forms. Then all these 1-forms are closed. 
\end{lemma}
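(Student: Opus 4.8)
The plan is to use the standard formula relating the exterior derivative of a 1-form to Lie brackets of vector fields. Specifically, for any 1-form $\theta$ and any two vector fields $X,Y$, one has the identity $d\theta(X,Y) = X(\theta(Y)) - Y(\theta(X)) - \theta([X,Y])$. I would apply this with $\theta = \theta_a$ for each $a \in \{1,2,3,4\}$ and with $X = \xi_b$, $Y = \xi_c$ ranging over all pairs from the frame.

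First I would note that, by duality, $\theta_a(\xi_b) = \delta_{ab}$ is constant, so the first two terms $X(\theta_a(Y))$ and $Y(\theta_a(X))$ vanish identically. Next, since the frame vector fields all commute (this is exactly the hypothesis $[\xi_b,\xi_c]=0$ established in the preceding lemmas), the last term $\theta_a([\xi_b,\xi_c])$ also vanishes. Hence $d\theta_a(\xi_b,\xi_c) = 0$ for every $a$ and every pair $b,c$. Because the $\xi_b$ span $TM$ pointwise, the 2-form $d\theta_a$ vanishes on every pair of basis vectors, hence $d\theta_a = 0$ identically. This holds for each $a = 1,2,3,4$, giving the claim.

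I do not expect any genuine obstacle here: the lemma is a routine consequence of the Cartan-type formula for $d$ of a 1-form together with the commutativity already proved. The only mild point to be careful about is making explicit that the span condition lets one conclude vanishing of the 2-form $d\theta_a$ from its vanishing on the frame, but this is immediate since $d\theta_a$ is tensorial and the $\xi_b$ form a (local) basis of $TM$. One could alternatively phrase the whole argument by writing a general vector field as a combination of the $\xi_b$ and using bilinearity, but invoking the coordinate-free identity is cleanest. Note also that the commutativity of all the $\xi_b$ means, by the standard correspondence, that they locally arise as coordinate vector fields, which gives an immediate alternative proof — in such coordinates $\theta_a$ are the coordinate differentials, manifestly closed — but the direct computation is shorter and needs no appeal to the flow-box theorem.
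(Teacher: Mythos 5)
Your proof is correct and follows exactly the same route as the paper: both apply the identity $d\theta(X,Y) = X(\theta(Y)) - Y(\theta(X)) - \theta([X,Y])$ to the frame vector fields and observe that all three terms vanish. Your write-up is in fact slightly more explicit than the paper's about why each term is zero and why vanishing on the frame implies $d\theta_a=0$ everywhere.
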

\begin{proof}This follows from the standard formula
    \begin{align}
        d\theta(X,Y) = X(\theta(Y)) - Y(\theta(X)) - \theta([X,Y]), \quad \theta \in \Lambda^1, \quad X,Y \in TM.
    \end{align}
    Indeed, evaluating $d\theta$ for any of the 1-forms, on any pair of the vector fields $\xi_{1,2,3,4}$, we see that all the terms on the right-hand side are zero.  
   \end{proof}
   
 \subsection{Construction of the frame}

This subsection is central. Here we use the availability of the two commuting complex structures to find a convenient parametrisation of the 1-forms dual to $\xi_1,\xi_2,J(\xi_1),J(\xi_2)$. After such a parametrisation is obtained, one can characterise the two commuting complex structures very explicitly. Here we follow \cite{AmbitoricGeomeAposto2013} rather closely, and even keep some of the notations of these authors. However, unlike this reference, we try to be as elementary as possible, which means without using any algebraic geometry. 

First, we note that we can always choose the frame dual to $\xi_1,\xi_2,J(\xi_1),J(\xi_2)$ to be of the form $\theta_1,\theta_2,J(\theta_1),J(\theta_2)$. Since the original vector fields commute, these 1-forms are all closed. This means that there are local coordinates such that 
   \begin{align} 
   \theta_1 = d\tau, \qquad \theta_2= d\varphi.
   \end{align}
   The condition that the other two 1-forms are closed becomes
    \begin{align} 
   dJd\tau = 0, \qquad dJ d\varphi =0.
   \end{align}

We now come to the main statement of this subsection.
\begin{proposition} There exists a choice of coordinates $x,y$, and of two functions $F=F(x), G=G(y)$, such that the two commuting complex structures $J_\pm$ are realised as follows
\begin{align}
 J_\pm d\varphi &= \frac{1}{F}dx \pm \frac{1}{G} dy, \nonumber \\
    J_\pm d\tau &= \frac{x}{F}dx  \pm \frac{y}{G} dy, \nonumber\\
    J_\pm dx &= -\frac{F}{y-x}(d\tau - y d\varphi), \nonumber\\
    J_\pm dy &= \pm \frac{G}{y-x}(d\tau - x d\varphi). \label{eq:J_pm_action_on_txyf_basis}
\end{align}
\end{proposition}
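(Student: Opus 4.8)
The plan is to build the frame constructively from the two closedness conditions $d(J_\pm d\tau)=0$ and $d(J_\pm d\varphi)=0$, together with the two structural facts we already have at our disposal: the complex structures are $J_\pm$-compatible with the K\"ahler forms $\omega_\pm$ (so $g_\pm(\cdot,J_\pm\cdot)=\omega_\pm$ is closed), and $\xi_1,\xi_2$ span a $J_\pm$-isotropic distribution. First I would note that $J_\pm d\tau$ and $J_\pm d\varphi$, being contractions of $J_\pm$ on the span of the dual one-forms of $\xi_1,\xi_2$, must lie in the annihilator of $\mathrm{span}(\xi_1,\xi_2)$ — this is exactly the isotropy statement dualised — so each of $J_\pm d\tau$, $J_\pm d\varphi$ is a combination of two one-forms $\eta^1_\pm,\eta^2_\pm$ that annihilate $\xi_1,\xi_2$ and that are themselves closed (being $J_\pm$ of closed forms, by the lemma $dJd\tau=0$ etc.). Hence we may write $J_\pm d\varphi = c_\pm\, du_\pm + d_\pm\, dv_\pm$ and $J_\pm d\tau = a_\pm\, du_\pm + b_\pm\, dv_\pm$ for suitable functions and suitable locally-defined coordinates $u_\pm,v_\pm$ obtained by integrating the closed annihilating one-forms.

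Next I would diagonalise the "$\pm$" structure. The key observation (which is really the heart of the ambitoric story in \cite{AmbitoricGeomeAposto2013}) is that $J_+$ and $J_-$ differ only by the orientation of the $2$-plane complementary to $\mathrm{span}(\xi_1,\xi_2)$: on $\mathrm{span}(\xi_1,\xi_2)$ they act identically up to sign in the sense encoded by the $\pm$ in the stated formulas. Concretely, comparing $J_+ d\varphi$ and $J_- d\varphi$ one isolates a "symmetric" one-form $\tfrac12(J_+ + J_-)d\varphi$ and an "antisymmetric" one $\tfrac12(J_+ - J_-)d\varphi$, and similarly for $d\tau$; each is closed, so each is $d$ of a function. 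Declaring the symmetric combination for $d\varphi$ to come from a coordinate $x$ and the antisymmetric one from a coordinate $y$ (after absorbing functions of one variable each — this is where $F=F(x)$, $G=G(y)$ enter, as integrating factors that must depend on a single variable because the corresponding one-forms are closed and already expressed in terms of $dx$, resp. $dy$) gives $J_\pm d\varphi = \tfrac1F dx \pm \tfrac1G dy$. The relation $J_\pm d\tau = \tfrac{x}{F}dx \pm \tfrac{y}{G}dy$ then follows by the same decomposition together with the constraint that $J_\pm^2=-\id$ and $\omega_\pm$ is $J_\pm$-invariant: these force the coefficients multiplying $dx$ and $dy$ in $J_\pm d\tau$ to be $x/F$ and $y/G$ respectively, up to a redefinition of $\tau$ (a shift by functions of $x,y$), which fixes the normalisation $x,y$ of the two coordinates.

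Then I would recover the last two lines of \eqref{eq:J_pm_action_on_txyf_basis} essentially for free from $J_\pm^2=-\id$: applying $J_\pm$ to $J_\pm d\varphi = \tfrac1F dx \pm \tfrac1G dy$ and to $J_\pm d\tau = \tfrac{x}{F}dx \pm \tfrac{y}{G}dy$ gives two linear equations for $J_\pm dx$ and $J_\pm dy$ in terms of $d\tau,d\varphi$; solving this $2\times 2$ system produces exactly $J_\pm dx = -\tfrac{F}{y-x}(d\tau - y\,d\varphi)$ and $J_\pm dy = \pm\tfrac{G}{y-x}(d\tau - x\,d\varphi)$, with the Jacobian of the system being $\propto (y-x)$, which is why that factor appears. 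I would double-check consistency with the remaining closedness condition $d(J_\pm d\tau)=0$: since $J_\pm d\tau = x\,d(\text{something})$-type expression, $d(J_\pm d\tau) = d x\wedge \tfrac{dx}{F} \pm \ldots = 0$ automatically once $F$ depends only on $x$ and $G$ only on $y$, so no new constraint arises — this is the sanity check that the ansatz is self-consistent.

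The main obstacle I anticipate is justifying rigorously that the integrating factors can be taken to depend on a single variable each, i.e. that after choosing $x,y$ appropriately the one-form $J_\pm d\varphi$ really has the clean form $\tfrac1F dx \pm \tfrac1G dy$ rather than $P(x,y)dx + Q(x,y)dy$ with genuinely two-variable coefficients. This is where one must use closedness of $J_+ d\varphi$ and of $J_- d\varphi$ \emph{separately} (giving $\partial_y(\tfrac1F \text{-part}) = \partial_x(\pm\tfrac1G\text{-part})$-type relations for both signs, whose sum and difference force the mixed partials to vanish), together with the freedom to reparametrise $x\mapsto x(x)$, $y\mapsto y(y)$ to absorb the remaining single-variable functions. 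Handling this decoupling cleanly, and tracking which coordinate redefinitions of $\tau,\varphi,x,y$ have been spent where, is the delicate bookkeeping part; everything else is linear algebra plus $J^2=-\id$.
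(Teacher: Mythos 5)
Your overall strategy is the same as the paper's: integrate the closed one-forms $\tfrac12(J_++J_-)d\varphi$ and $\tfrac12(J_+-J_-)d\varphi$ to obtain coordinates $\xi,\eta$ transverse to $\tau,\varphi$, then use the algebraic identities and the remaining closedness conditions to pin down the frame. But the middle step --- establishing $J_\pm d\tau = \tfrac{x}{F}dx \pm \tfrac{y}{G}dy$ --- has a genuine gap. First, you assert rather than derive that $J_+$ and $J_-$ ``differ only by the orientation of the complementary $2$-plane''; this is essentially the content of the proposition, and in the paper it is extracted by writing $J_\pm d\xi$, $J_\pm d\eta$ with eight unknown coefficient functions and then imposing both $J_\pm^2=-\id$ \emph{and} $[J_+,J_-]=0$, the latter of which you never invoke explicitly. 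It is the commutativity that forces the $d\xi$-coefficient of $J_\pm d\tau$ to be sign-independent and the $d\eta$-coefficient to flip sign, so that $J_\pm d\tau = a\,d\xi \pm b\,d\eta$ with the \emph{same} pair $(a,b)$ for both structures. Second, and more importantly, you relegate $d(J_\pm d\tau)=0$ to a ``sanity check that comes for free''. It is not: at the point where it is used, $a$ and $b$ are arbitrary functions of $(\xi,\eta)$, and it is precisely $d(a\,d\xi \pm b\,d\eta)=0$ for both signs that forces $a=a(\xi)$ and $b=b(\eta)$. Only then can one \emph{define} $x:=a(\xi)$, $y:=b(\eta)$, which is what makes the coefficients in $J_\pm d\tau$ literally equal to the coordinates; no ``redefinition of $\tau$'' is involved, and a shift $\tau\mapsto\tau+f(x,y)$ is not an allowed move, since $d\tau$ is fixed up to an additive constant by the requirement that it be dual to the commuting frame. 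Your worry about single-variable integrating factors in $J_\pm d\varphi$ is, by contrast, a non-issue: in the coordinates $\xi,\eta$ one has $J_\pm d\varphi=d\xi\pm d\eta$ exactly, and $F=dx/d\xi$, $G=dy/d\eta$ are automatically single-variable once $x,y$ are chosen as above.

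A smaller point: if you actually solve the $2\times2$ system in your third step you will find $J_\pm dx = -\tfrac{F}{x-y}(d\tau-y\,d\varphi)$ and $J_\pm dy=\pm\tfrac{G}{x-y}(d\tau-x\,d\varphi)$, i.e.\ with denominators $x-y$ (agreeing with the formulas at the end of the paper's own proof), not $y-x$ as in the displayed Proposition; with the first two lines as given, the $y-x$ version fails $J_\pm^2=-\id$. Your claim to reproduce the stated signs ``exactly'' suggests the elimination was not actually carried out.
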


\begin{proof}
Since $dJ_\pm d\tau = 0, dJ_\pm d\varphi = 0$, there exist coordinates $\xi,\eta$ such that 
\begin{gather}
    d\xi = \frac{1}{2}(J_++J_-)d\varphi, \quad d\eta = \frac{1}{2}(J_+-J_-)d\varphi. \nonumber \\
    J_\pm d\varphi= d\xi \pm d\eta
\end{gather}
The two coordinates introduced $(\xi,\eta)$ form a good coordinate system together with $\tau,\varphi$ because both complex structures map $(d\tau,d\varphi)$ to a complementary subspace, which we now parametrised by $(d\xi,d\eta)$. 

The action of the complex structures on $(d\xi,d\eta)$ gives a linear combination of $(d\tau,d\varphi)$, and can be parametrised as follows. 
\begin{gather}
    J_\pm d\xi = \alpha_\pm d\tau + \beta_\pm d\varphi, \quad J_\pm d\eta = \chi_\pm d\tau + \delta_\pm d\varphi.
\end{gather}
Here $\alpha_\pm,\beta_\pm,\chi_\pm,\delta_\pm$ are eight at this stage arbitrary functions. However, with $\partial_\tau,\partial_\varphi$ being Killing vectors, we can assume that these functions depend on $\xi,\eta$ only. Next we impose that $J_\pm^2 = -1$ on $d\varphi$
\begin{align}\label{J-pm-squared-eq}
    J_\pm^2 d\varphi = J_\pm d\xi \pm J_\pm d\eta = (\alpha_\pm \pm \chi_\pm) d\tau + (\beta_\pm \pm \delta_\pm) d\varphi = -d\varphi \nonumber \\
    \Rightarrow \quad \alpha_\pm \pm \chi_\pm = 0, \quad \beta_\pm \pm \delta_\pm =-1.
\end{align}
Next we enforce that $J_+$ and $J_-$ commute. We have
\begin{align}
    J_- J_+ d\varphi = J_- d\xi + J_- d\eta = (\alpha_- +\chi_-) d\tau +(\beta_- +\delta_-)  d\varphi, \nonumber \\
    J_+ J_- d\varphi = J_+ d\xi - J_+ d\eta = (\alpha_+ -\chi_+) d\tau +(\beta_+ -\delta_+)  d\varphi.
\end{align}
The commutativity $[J_+,J_-]=0$, together with the conditions in \cref{J-pm-squared-eq}, implies that the action of $J_\pm$ on $d\xi,d\eta$ can be parametrised by only two functions
\begin{gather}
    J_\pm d\xi = -\chi d\tau - (1+\delta) d\varphi, \qquad J_\pm d\eta = \pm (\chi d\tau + \delta d\varphi).
\end{gather}
Lastly, we look at the action on $d\tau$, in general it will have the form 
\begin{align}
    J_\pm d\tau = a_\pm d\xi + b_\pm d\eta \nonumber
\end{align}
with 4 arbitrary functions $a_\pm, b_\pm$. Again we check
\begin{gather}
    J_\pm^2 d\tau = a_\pm J_\pm d\xi + b_\pm J_\pm d\eta = \left( -a_\pm  \pm b_\pm \right) \chi d\tau +  \left( - a_\pm (1+ \delta) \pm b_\pm  \delta\right) d\varphi= -d\tau \nonumber \\
    \Rightarrow \quad a_\pm = -\frac{\delta}{\chi}, \quad b_\pm = \mp \frac{1+\delta}{\chi}
\end{gather}
Hence the action on $d\tau$ is 
\begin{gather}
    J_\pm d\tau = - \frac{\delta}{\chi} d\xi \mp \frac{1+\delta}{\chi} d\eta
\end{gather}
We have already imposed the closure $d J_\pm d\varphi=0$. It remains to impose $d J_\pm d\tau=0$. We get
\begin{gather}
  0=  d(J_\pm d\tau) = \left[ \left( \frac{\delta}{\chi} \right)_\eta \mp \left( \frac{1+\delta}{\chi} \right)_\xi \right] d\xi \wedge d\eta \nonumber \\
    \Rightarrow \quad \left( \frac{\delta}{\chi} \right)_\eta = 0, \quad \left( \frac{1+\delta}{\chi} \right)_\xi = 0
\end{gather}
These equations are easily solved by introducing two functions $a = a(\xi)$, $b = b(\eta)$ such that $\delta = -\frac{a}{a-b}, \chi = \frac{1}{a-b}$.
Summarising the results so far we have
\begin{align}
    J_\pm d\varphi =& d\xi \pm d\eta \nonumber \\
    J_\pm d\tau =& a d\xi \pm b d\eta \nonumber \\
    J_\pm d\xi =& \frac{1}{a-b} \left( - d\tau + b d\varphi \right) \nonumber \\
    J_\pm d\eta =& \pm\frac{1}{a-b}\left(  d\tau -a d\varphi \right)
\end{align}
From here we make a coordinate transformation $x = a(\xi), y = b(\eta)$. Writing $dx = a' d\xi \equiv F d\xi, dy = b' d\eta= G d\eta$, 
we rewrite the action of the complex structures as
\begin{align}
    J_\pm d\varphi =& \frac{dx}{F} \pm \frac{dy}{G} \nonumber \\
    J_\pm d\tau =& \frac{x dx}{F} \pm \frac{y dy}{G} \nonumber \\
    J_\pm dx =& - \frac{F}{x-y}\left(  d\tau - y d\varphi \right) \nonumber \\
    J_\pm dy =& \pm \frac{G}{x-y}\left(  d\tau - x d\varphi\right),
\end{align}
where $F=F(x), G=G(y)$. 

\end{proof}

\subsection{Product structure}

We now consider the operator $j:= -J_+ J_-$. This squares to the identity $j^2 = \id$. It is also clear that, because both $J_\pm$ are orthogonal (i.e. metric-compatible) complex structures, the operator $j$ is also metric-compatible. Such an operator $j:TM\to TM: j^2=\id$ is known as an orthogonal product structure. The reason for this terminology is that 
it decomposes the (tangent and) cotangent space into the eigenspaces of eigenvalue $\pm 1$. It is easy to see that the subspaces of opposite eigenvalue are metric orthogonal. Indeed, let $\eta_1, \eta_2\in \Lambda^1$ be arbitrary eigenforms of $j$ belonging to different eigenspaces 
    \begin{align}
        j \eta_+ = +\eta_+, \quad j \eta_- = - \eta_-.
    \end{align}
   We then have
    \begin{align}
        (\eta_-, \eta_+) = (j \eta_-, j \eta_+) = -(\eta_-, \eta_+) = 0.
    \end{align}
    Here the bracket denotes the metric pairing. 
   
   Now, using one of the complex structures, say $J_+$, we can introduce two more 1-forms
    \begin{align}
        J_+ \eta_-, J_+ \eta_+ \in \Lambda^1, \quad s.t. \quad j(J_+ \eta_-) = - J_+ \eta_-, \quad j(J_+ \eta_+) = +J_+ \eta_+.
    \end{align}
    It is easy to see that $(J_+ \eta_-,\eta_-)=0, (J_+ \eta_+,\eta_+)=0$ and so the basis of 1-forms 
        \begin{align}
        \eta_+,\quad J_+ \eta_+,\quad \eta_-,\quad J_+ \eta_-
    \end{align}
   is metric-orthogonal. 

\subsection{Metric ansatz}

Using the complex structures defined in \cref{eq:J_pm_action_on_txyf_basis}, it is now an easy exercise to check that
\begin{align}
j(dx)= dx, \qquad j(dy)=- dy.
\end{align}
This means that the 1-forms 
\begin{align}
dx, dy, (d\tau - y d\varphi), (d\tau- x d\varphi),
\end{align}
where the last two 1-forms are obtained by applying e.g. $J_+$ to $dx,dy$, form a metric orthogonal basis. This means that the metric we are considering is of the form
\begin{align}
    g = A(d\tau-y d\varphi)^2 + B dx^2 + C dy^2 + D(d\tau- x d\varphi)^2. \label{eq:General_Ambitoric_Metric}
\end{align}
where $A,B,C,D$ are functions of $x,y$ only.

However, metrics of the form (\ref{eq:General_Ambitoric_Metric}) are not always compatible with the 2 complex structures.
To require this we impose that $g(\cdot ,J_\pm \cdot) \in \Lambda^2$.
Computing the tensor $g(\cdot, J_\pm \cdot)$ we find
\begin{align}
    g(\cdot, J_\pm \cdot) = \frac{A(x-y)}{F}dx \otimes (dt - y d\varphi) - BF\frac{(dt-y d\varphi)}{x-y} \otimes dx \\ \nonumber
    \mp \frac{D(x-y)}{G}dy \otimes (dt - x d\varphi) \pm CG \frac{(dt - x d\varphi)}{x-y} \otimes dy.
\end{align}
Imposing that this is a 2-form results in the following restrictions to the components of $g$,
\begin{align}
    \frac{A(x-y)}{F} = \frac{BF}{x-y}, \quad \frac{D(x-y)}{G} = \frac{CG}{x-y}.
\end{align}
Which are most conveniently solved by introducing two new functions $U = U(x,y)$ and $V = V(x,y)$ so that the solutions become
\begin{align}
    A = \frac{FU}{(x-y)^2}, \quad B = \frac{U}{F}, \quad C = \frac{V}{G}, \quad D = \frac{GV}{(x-y)^2}.
\end{align}
Thus, finally, the metric and the two compatible 2-forms ($\Sigma^1_\pm = g(\cdot, J_\pm \cdot$))  are given by
\begin{align}
    g = FU \left( \frac{d\tau - y d\varphi}{x-y} \right)^2 + \frac{U}{F} dx^2 + \frac{V}{G} dy^2 + GV \left( \frac{d\tau-xd\varphi}{x-y} \right)^2 \\
    \Sigma^1_\pm = \frac{U}{x-y} dx \wedge (dt - yd\varphi) \mp \frac{V}{x-y} dy \wedge (d\tau - x d\varphi)\label{eq:Conformal2Form_and_ConformalMetric}.
\end{align}
It should be appreciated how close this metric ansatz is to the Euclidean Kerr metric. We emphasise that the only information that went into the construction of this ansatz is that there are two commuting Killing vector fields, as well as two commuting complex structures. No field equations have yet been imposed. 

\subsection{Conformal to two K\"ahler metrics}

We now want to impose the requirement that $g$ is conformal to two different K\"ahler metrics. That is, we demand that $\Sigma^1_\pm$ are conformal to closed 2-forms 
\begin{align}
    d( \lambda^2_\pm \Sigma^1_\pm) = 0.
\end{align}
with $\lambda_\pm = \lambda_\pm(x,y)$ being the conformal factors for each K\"ahler metric.
The Lee forms of $\Sigma^1_\pm$ are 1-forms, $\theta_\pm$, defined by 
\begin{align}
    d\Sigma^1_\pm = \theta_\pm \wedge \Sigma^1_\pm.
\end{align}
When $\Sigma^1_\pm$ are conformal to closed 2-forms the Lee forms themselves are closed.
In terms of the conformal factors we find that $\theta_\pm = - d\ln(\lambda^2_\pm)$, and hence closed.
Using the 2-forms in \cref{eq:Conformal2Form_and_ConformalMetric} and solving for $\theta_\pm$ we find them to be
\begin{align}
    \theta_\pm = \left( \frac{V_x}{V} \pm \frac{U}{V(x-y)} \right) dx + \left( \frac{U_y}{U} \mp \frac{V}{U(x-y)} \right)dy,
\end{align}
where we write $V_x = \partial_x V$ and similarly for $y$. We want to impose the condition that both of these are closed 1-forms. 
It is more efficient to take linear combinations. Firstly, we demand 
\begin{align}
    d(\theta_+ + \theta_-) = 0 \quad \Rightarrow \quad \ln(U)_{xy} - \ln(V)_{xy} = 0
\end{align}
which can be solved by reparameterising by new functions $A = A(x),\ B = B(y)$ and $H = H(x,y)$, such that
\begin{align}
    U = \frac{x-y}{H^2A}, \quad V = \frac{x-y}{H^2B}.
\end{align}
The introduction of $x-y$ here is to help simplify later formulae.
The other linear combination is 
\begin{align}
    d(\theta_+ - \theta_-) = 0 \quad \Rightarrow \quad -(B^2)_y (x-y) - 2B^2 = (A^2)_x (x-y) - 2A^2. \label{eq:AB_Quadratic_Equation}
\end{align}
Taking derivatives twice with respect to $x$ we get $(A^2)_{xxx} = 0$. Similarly taking the derivative with respect to $y$ twice we get  $(B^2)_{yyy}=0$. This means that 
these equations can be solved by $A^2$ and $B^2$ being quadratic polynomials in their respective variables, with arbitrary coefficients. Further, 
substituting the quadratic ansatz for $A^2,B^2$ back into \cref{eq:AB_Quadratic_Equation} we find that their coefficients coincide, that is 
\begin{align}
    A^2 = R(x), \quad B^2 = R(y),\quad R(z) = r_0 + r_1 z + r_2 z^2. \label{eq:AmbiToric_AB_R_Solution}
\end{align}
We thus find that, given the two functions $F(x),G(y)$ and the 3 constants $r_0,r_1,r_2$ defining $A(x),B(y)$, the metric 
\begin{align}
    g = \frac{1}{H^2} \left[ \frac{F}{A(x-y)} (d\tau-yd\varphi)^2 + \frac{x-y}{FA} dx^2 + \frac{x-y}{BG} dy^2 + \frac{G}{B(x-y)} (d\tau-x d\varphi)^2 \right] \label{eq:GeneralAmbitoric_Metric}
\end{align}
is an ambitoric metric that is conformal to two K\"ahler metrics with opposite orientations. This ansatz for the metric is half-way to determining the Kerr metric.

\subsection{Solving for the conformal factor}

We are interested in a metric that is Einstein, so we now impose Einstein's equations. 
The easiest way to do this is using the chiral formalism explained in \cref{sec:Chiral_Formalism}. The derivation we present here is different from that in \cite{AmbitoricGeomeAposto2013}, and, we hope, more straightforward. From \cref{eq:GeneralAmbitoric_Metric} we choose a convenient basis for the frame 
\begin{align}
    e^0 = \frac{1}{H}\sqrt{\frac{F}{A(x-y)}}(d\tau-yd\varphi), \, e^1 = \frac{1}{H}\sqrt{\frac{x-y}{FA}}dx, \, e^2 = \frac{1}{H}\sqrt{\frac{x-y}{BG}}dy, \, e^3 = \frac{1}{H}\sqrt{\frac{G}{B(x-y)}} (d\tau - x d\varphi).
\end{align}
Using \cref{eq:SelfDualTwoForms_From_Tetrad} we can build the SD 2-forms
\begin{align}
    \Sigma^1 &= \frac{1}{H^2} \left( \frac{1}{A} (d\tau-yd\varphi) \wedge dx - \frac{1}{B} dy \wedge (d\tau-xd\varphi) \right) \\
    \Sigma^2 &= \frac{1}{H^2} \left( \sqrt{\frac{F}{ABG}} (d\tau-y d\varphi) \wedge dy - \sqrt{\frac{G}{ABF}} (d\tau-x d\varphi) \wedge dx \right) \\
    \Sigma^3 &= \frac{1}{H^2} \left( \sqrt{\frac{FH}{AB}} d\varphi\wedge d\tau - \frac{x-y}{\sqrt{ABF}} dx \wedge dy \right).
\end{align}
The self-dual connection, obtained by solving $d^A \Sigma^i = d\Sigma^i + \epsilon^{ijk} A^j \wedge \Sigma^k = 0$, in this basis becomes
\begin{align}
    A^1 &= \left( \frac{BF}{A} - \frac{F^2}{AH^2} \left( \frac{(x-y)AH^2}{F} \right)_x \right) \frac{d\tau-y d\varphi}{2(x-y)^2} - \left( \frac{AG}{B} + \frac{G^2}{BH^2} \left( \frac{(x-y)BH^2}{G} \right)_y \right) \frac{d\tau - x d\varphi}{2(x-y)^2} \nonumber\\
    A^2 &= \sqrt{\frac{FG}{AB}} \left[  \frac{A H_x - B H_y}{H(x-y)} d\tau + \frac{H(A-B) + 2By H_y - 2Ax H_x}{H(x-y)} d\varphi \right] \label{eq:AmbiToric_Connection_xy}\\
    A^3 &= \frac{1}{\sqrt{AB}} \left[ \sqrt{\frac{F}{G}} \frac{H(B-A) + 2B(x-y)H_y}{2H(x-y)} dx + \sqrt{\frac{G}{F}} \frac{H(A-B)-2A(x-y) H_x}{2H(x-y)} dy \right].\nonumber
\end{align}

The curvatures are given by rather long expressions. This is the only place where we needed to resort to algebraic manipulation software. We state the curvature 2-forms by decomposing them into their self-dual and anti-self dual components. We write 
\begin{align}
    F^i = M^{ij} \Sigma^j + R^{ij} \bar{\Sigma}^j.
\end{align}
The nonzero components of $M^{ij}$ are 
\begin{align}
    M^{11} =&\frac{AH^2}{24}\left( \frac{F\log(A)_x}{x-y} \right)_x - \frac{AH^3}{12(x-y)} \left( F\left( \frac{1}{H} \right)_x \right)_x - \frac{AH^2}{24}\left( \frac{F}{x-y} \right)_{xx} - \frac{AFH^2}{48(x-y)^3} \\
            &+\frac{BH^2}{24}\left( \frac{G\log(B)_x}{x-y} \right)_y - \frac{BH^3}{12(x-y)} \left( G\left( \frac{1}{H} \right)_y \right)_y - \frac{BH^2}{24}\left( \frac{G}{x-y} \right)_{yy} - \frac{BGH^2}{48(x-y)^3} \nonumber\\
            &+ \frac{ABH^2}{12}\left[ \left( \frac{G}{B(x-y)^2} \right)_y - \left( \frac{F}{A(x-y)^2} \right)_x \right] - \frac{H^2}{16(x-y)^3}\left( \frac{A^3 G + B^3 F}{AB} \right) \nonumber \\[10pt]
    M^{22} = M^{33} =& \frac{ABH^2}{24}\left( \frac{F}{A(x-y)^2} \right)_x - \frac{ABH^2}{24}\left( \frac{G}{B(x-y)^2} \right)_y + \frac{H^2}{48(x-y)^3} \left( \frac{A^3G+B^3F}{AB} \right) \\
                     & - \frac{AH^3}{12(x-y)}\left( F\left( \frac{1}{H} \right)_x \right)_x - \frac{AFH^2}{48(x-y)^3} - \frac{AH^2}{24} \left( \frac{F}{(x-y)^2} \right)_x \nonumber\\
                     & - \frac{BH^3}{12(x-y)}\left( G\left( \frac{1}{H} \right)_y \right)_y - \frac{BGH^2}{48(x-y)^3} + \frac{BH^2}{24} \left( \frac{G}{(x-y)^2} \right)_y \nonumber \\[10pt]
    M^{12} = M^{21} =& -\sqrt{\frac{FG}{AB}}\frac{H^2}{24(x-y)^3} \left( \frac{x-y}{2}(A^2)_x - A^2 +\frac{x-y}{2}(B^2)_y + B^2 \right)
\end{align}
We remind that one of the Einstein equations states that the trace of the matrix $M^{ij}$ coincides with the cosmological constant $M^{11}+M^{22}+M^{33}=\Lambda$. The components of $R^{ij}$ are
\begin{align}
    R^{11} =& \frac{AFH}{12}\left( \frac{H}{x-y} \right)_{xx} - \frac{AH^2}{24}\left( \frac{F}{x-y} \right)_{xx} + \frac{AHF_x H_x}{12(x-y)} +\frac{AH^2}{24}\left( \frac{F\log(A)_x}{x-y} \right)_x - \frac{7AFH^2}{48(x-y)^3}\\
            & -\frac{BGH}{12}\left( \frac{H}{x-y} \right)_{yy} - \frac{BH^2}{24}\left( \frac{G}{x-y} \right)_{yy} + \frac{BHG_y H_y}{12(x-y)} +\frac{BH^2}{24}\left( \frac{G\log(B)_y}{x-y} \right)_y - \frac{7BGH^2}{48(x-y)^3}\nonumber \\
            & +\frac{H^2}{16(x-y)^3}\left( \frac{B^3 F - A^3 G}{AB} \right) \nonumber \\[10pt]
    R^{22} =& -\frac{FH(AH_x)_x}{24(x-y)} + \frac{FH^2A_x}{24(x-y)^2} - \frac{AFH^2}{48(x-y)^3} +\frac{GH(BH_y)_y}{24(x-y)} + \frac{GH^2B_y}{24(x-y)^2} - \frac{BGH^2}{48(x-y)^3} \\
            & + \frac{H^2}{48(x-y)^3} \left( \frac{B^3 F - A^3 G}{AB} \right) \nonumber \\[10pt]
    R^{33} =& -\frac{FH(AH_x)_x}{24(x-y)} + \frac{FH^2A_x}{24(x-y)^2} - \frac{AFH^2}{48(x-y)^3} -\frac{GH(BH_y)_y}{24(x-y)} - \frac{GH^2B_y}{24(x-y)^2} + \frac{BGH^2}{48(x-y)^3} \\
            & - \frac{H^2}{48(x-y)^3} \left( \frac{B^3 F - A^3 G}{AB} \right) \nonumber \\[10pt]
    R^{12} =& \sqrt{\frac{FG}{AB}}\frac{H}{12(x-y)^2} \left[ 2AB\left( 2(x-y)H_{xy} + H_x - H_y \right) + \left( 2B^2 H_y - \frac{H}{2}(B^2)_y - 2A^2 H_x + \frac{H}{2}(A^2)_x \right) \right] \\
    R^{21} =& \sqrt{\frac{FG}{AB}}\frac{H}{12(x-y)^2} \left[ 2AB\left( 2(x-y)H_{xy} + H_x - H_y \right) - \left( 2B^2 H_y - \frac{H}{2}(B^2)_y - 2A^2 H_x + \frac{H}{2}(A^2)_x \right) \right]
\end{align}

We find it useful to first analyse the equations imposed by $R^{12} = 0, R^{21}=0$. Taking the sum and difference of these equations we get
\begin{align}
    2(x-y)H_{xy} + H_x - H_y = 0 \label{eq:H_equation_xy} \\
    2A^2 H_x - \frac{H}{2}(A^2)_x -2B^2 H_y + \frac{H}{2}(B^2)_y = 0. \label{eq:HAB_equation_xy}
\end{align}
To solve the first equation we introduce a coordinate substitution $t,r = (x+y,x-y)$ such that is becomes 
\begin{align}
    H_{tt} = r\left( \frac{1}{r}H_r \right)_r.
\end{align}
Looking for separable solutions of the form $H = T(t) + R(r)$ we obtain the following class of solutions
\begin{align}
    H(t,r) = c_0 r^2 \ln(r) + c_3 r^2 + c_0 t^2  + c_1 t + c_2.
\end{align}
Another class of solutions is obtained by taking $H = H(z)$ and $z = \sqrt{t^2-r^2}$ the equation reduces to 
\begin{align}
    H_{zz} = 0
\end{align}
which has a solution of the form 
\begin{align}
    H(t,r) = c_4 \sqrt{t^2-r^2} + const.
\end{align}
As the original PDE is linear in $H$, we can add these solutions.
We note that other separable solutions exist in terms of Bessel functions, but these will not be relevant.
Substituting the solution into \cref{eq:HAB_equation_xy} we can restrict the allowed values of the constants $c_i$ and $r_i$, see \cref{eq:AmbiToric_AB_R_Solution} for the definition of the latter. In doing so we find 3 different solutions for $H(x,y)$ and $R(z)$
\begin{alignat}{2}
    1) \quad H(x,y) &= 1 + \varepsilon \sqrt{xy}, \quad &&R = z \nonumber\\
    2) \quad H(x,y) &= \sqrt{xy}, \quad &&R = z^2 + 2\varepsilon z \label{eq:HR_solutions_cases} \\
    3) \quad H(x,y) &= 1 + \varepsilon (x+y), \quad &&R = 1 \nonumber
\end{alignat}
where we have renamed some constants for later convenience.
In the main text we will focus on case $1)$ as this gives the family that contains the PD metrics.
The other cases are described in \cite{AmbitoricGeomeAposto2013} and briefly in \cref{Apdx:Alternative_Cases_Solution}.
In case $1)$, we have that the solutions for $H,A,B$ are
\begin{align}
    H = 1+\varepsilon \sqrt{xy}, \quad A(x) = \sqrt{x}, \quad B(y) = \sqrt{y}.
\end{align}
We now perform another change of variables to remove the square roots. We take $x = r^2,\ y = q^2$, taking the positive branch of the square root.
The metric in these variables becomes 
\begin{align}
    g = \frac{1}{(1+\varepsilon rq)^2} \left( \frac{C}{r^2-q^2}(d\tau-q^2 d\varphi)^2 + \frac{r^2-q^2}{C} dr^2 + \frac{r^2-q^2}{D}dq^2 + \frac{D}{r^2-q^2} (d\tau-r^2 d\varphi)^2 \right) \label{eq:PlebanskiDemianskiMetric}
\end{align}
where we have introduced $C(r) = F/r$, $D(q) = G/q$.

\subsection{Pleba\'nski-Demia\'nski family of solutions}

Having chosen a solution for the conformal factor function $H^2$, we now calculate the Ricci scalar condition $M^{11}+M^{22}+M^{33} = \Lambda$. The components of the matrix $M^{ij}$ for the metric \cref{eq:PlebanskiDemianskiMetric} are as follows
\begin{align}
    M^{11} =& \frac{1}{6}\frac{(1+\varepsilon r q)^2(r+q)^3}{q^2-r^2}\left[ \left( \frac{C}{(r+q)^3} \right)_{rr} \hspace{-10pt}+ \left( \frac{D}{(r+q)^3} \right)_{qq} \right] \\ \nonumber
    +& \frac{(1+\varepsilon r q)^5}{12(q^2-r^2)} \left[ \left( \frac{C}{(1+\varepsilon rq)^3} \right)_{rr} \hspace{-10pt} + \left( \frac{D}{(1+\varepsilon rq)^3} \right)_{qq} \right], \\
    M^{22} = M^{33} =& -\frac{1}{12}\frac{(1+\varepsilon r q)^2(r+q)^3}{q^2-r^2}\left[ \left( \frac{C}{(r+q)^3} \right)_{rr} \hspace{-10pt}+ \left( \frac{D}{(r+q)^3} \right)_{qq} \right] \\ \nonumber
    + & \frac{(1+\varepsilon r q)^5}{12(q^2-r^2)} \left[ \left( \frac{C}{(1+\varepsilon rq)^3} \right)_{rr} \hspace{-10pt} + \left( \frac{D}{(1+\varepsilon rq)^3} \right)_{qq} \right].
\end{align}
Taking the trace results in the following equation
\begin{align}
    12 \varepsilon^2 (q^2 C + r^2 D) - 6\varepsilon(1+\varepsilon rq)(qC_r + r D_q) + (1+\varepsilon rq)^2 (C_{rr} + D_{qq}) = 4 (q^2-r^2) \Lambda.
\end{align}
By taking derivatives of both sides 3 times with respect to $r$ and $q$ gives 
\begin{align}
    C_{rrrrr} = 0, \quad D_{qqqqq} = 0.
\end{align}
This shows that $C$ and $D$ are now 4th order polynomials of their respective variables.
The only remaining nonzero component of the traceless Ricci tensor is then
\begin{align}
    R^{11} =& (r^2-q^2)(1+\varepsilon rq) D_{qq} + (4q+6\varepsilon rq^2 - 2\varepsilon r^3) D_q -4(1+3\varepsilon rq) D \nonumber\\
            & -(r^2-q^2)(1+\varepsilon rq) C_{rr} + (4r+6\varepsilon r^2q - 2\varepsilon q^3) C_r -4(1+3\varepsilon rq) C.
\end{align}
This allows to further restrict $C,D$. We substitute a generic quartic polynomial ansatz for C,D into $M^{11}+M^{22}+M^{33} = \Lambda$ and $R^{11} = 0$.
This gives the following solution
\begin{align}\label{eq:PlebanskiDemianskiFinalSolution}
    C = -b + 2mr +  r^2 + 2n\varepsilon r^3 - (\varepsilon^2 b + \Lambda/3) r^4 \\
    D = b + 2n q - q^2 + 2m\varepsilon q^3 + (\varepsilon^2 b + \Lambda/3) q^4.
\end{align}
We have use the freedom of rescaling $r,q \rightarrow \lambda r, \lambda q$ to set the coefficients in front of $r^2, -q^2$ to unity.

We have thus derived the Euclidean version of the Pleba\'nski-Demia\'nski metrics with their 5 parameters, $b,m,n,\varepsilon,\Lambda$ \cite{RotatingChargPleban1976,ANewLookAtTGriffi2005}. This is the same result that was obtained in \cite{AmbitoricGeomeAposto2013}, however the approach here was, we believe, more elementary, without any need for algebraic geometry considerations. 

To check that the metric we obtained is indeed one sided type $D \otimes D$ we can calculate both parts of the Weyl curvature. We get
\begin{align}
    \Psi = \begin{pmatrix}
        2\alpha & 0 & 0 \\
        0 & -\alpha & 0  \\
        0 & 0 & -\alpha
    \end{pmatrix}, \quad \alpha = \frac{n-m}{6} \left( \frac{1+\varepsilon rq}{r+q} \right)^3, \\ \nonumber
    \bar{\Psi} = \begin{pmatrix}
        2\bar{\alpha} & 0 & 0 \\
        0 & -\bar{\alpha} & 0 \\
        0 & 0 & -\bar{\alpha}
    \end{pmatrix}, \quad \bar{\alpha} = \frac{n+m}{6} \left( \frac{1+\varepsilon rq}{r-q} \right)^3.
\end{align}
Clearly both halves of the Weyl tensor are of type D.
When $m = \pm n$ one of the sides of the Weyl curvature vanishes, in this case the spacetimes are called self-dual Pleba\'nski-Demia\'nski \cite{KillingYanoTeNozawa2015}.

\subsection{Kerr Metric}\label{sec:PDMetricToKerrMetric}

To obtain the Kerr metric from the more general Pleba\'nski-Demia\'nski family of metrics we set $\Lambda = 0$. We want the metric to be asymptotically flat, requires that $g_{rr} \rightarrow 1$ as $r \rightarrow \infty$. For this to be the case the polynomial $C(r)$ must be at most quadratic, which necessitates $\varepsilon = 0$.
Now both $C,D$ are simple quadratic functions
\begin{align}
    C(r) = r^2 - 2Mr -a^2, \qquad 
    D(q) = a^2 - q^2,
\end{align}
where we renamed the constant terms suggestively, and also set the NUT charge (vanishing in Kerr) $n=0$. Now $C(r)=\Delta$ of the Kerr metric, and setting $q=a\cos\theta$ makes $D=a^2\sin^2\theta$. These choices can be motivated using the requirement that the metric is an appropriate analytic continuation of a Lorentzian signature metric, and that the Lorentzian metric is axi-symmetric, but we will not attempt this here. The main claim we are making is that the Euclidean Kerr metric is a member of a large family of metrics whose determination reduces to an exercise in linear algebra, determining the metric ansatz, following by a straightforward computation of curvatures using the Plebanski formalism. 

With these choices the metric \cref{eq:PlebanskiDemianskiMetric} becomes
\begin{align}
    g = \frac{\Delta}{\rho^2}(d\tau- a^2\cos^2\theta d\varphi)^2 + \frac{\rho^2}{\Delta} dr^2 + \rho^2 d\theta^2 + \frac{\sin^2\theta}{\rho^2} (a d\tau-a r^2 d\varphi)^2.
 \end{align}
 Now further changing the coordinates
 \begin{align}
 d\varphi = -\frac{1}{a} d\phi, \qquad d\tau = dt - a d\phi
 \end{align}
 gives the usual form of the Kerr metric in Boyer-Lindquist coordinates, with the frame given by \cref{eq:KerrTetradBasis}. The $\Lambda\not=0$ Kerr metric is another member of the Pleba\'nski-Demia\'nski family, but we have not attempted to exhibit it explicitly.

\section{Conclusion}

We have presented an alternative derivation of the (Euclidean) Kerr metric, one that proceeds by deriving the more general Pleba\'nski-Demia\'nski family. The derivation we described is "elementary" in the sense that the most non-trivial step of the construction, which is establishing the ansatz (\ref{eq:Conformal2Form_and_ConformalMetric}), proceeds by a calculation in linear algebra. Thus, one gets rather far without imposing any differential equations whatsoever. Some differential equations are imposed at the next step, which requires that the metric is conformal to two different K\"ahler metrics. This leads to the metric ansatz (\ref{eq:GeneralAmbitoric_Metric}). The rest of the analysis consists in imposing Einstein equations. This is seen to reduce the arbitrariness to two quartic polynomials, each of one variable. One can then recognise the Euclidean Kerr metric in the obtained family of solutions without difficulty. It corresponds to the case of quadratic polynomials. 

Our main reason to be interested in the geometry we described is the fact that the (Euclidean) Kerr metric exhibits the beautiful and rich geometry if viewed as a complex 4D manifold.  This is of course also one of the motivations of \cite{AmbitoricGeomeAposto2013}. It can also be seen from the derivation we presented that the Plebanski chiral formalism is best-suited for viewing 4D Euclidean geometry via the prism of complex geometry. Most of what we described is not new for mathematicians, apart from our new characterisation of Killing vector fields in section 3.2 and the new proof of Derdzi\'nski theorem in section 4. Some of the reasoning in section 6, in particular in the part where we impose Einstein equations, is also different from \cite{AmbitoricGeomeAposto2013}. But our main motivation in describing these results is our hope that the adopted here point of view of the Plebanski formalism will make these beautiful ideas more digestible for the gravitational physicists. 

Our final remark is that it is very likely that the described here rich ambi-K\"ahler and ambi-toric geometry of the Kerr metric is useful for the problem of gravitational perturbations of the Kerr BH. It is well-known that the problem of scalar, vector and tensor perturbations in the Kerr background is separable, see e.g. \cite{TheMathematicaChandr1998}. This is usually achieved by considering the Teukolsky equations. On the other hand, it is also well-known that the separability of the Kerr metric wave equation is due to the existence of a non-trivial symmetric rank 2 Killing tensor. This Killing tensor can be constructed as the square of the anti-symmetric rank 2 Killing-Yano tensor. All this is intimately related to the construction described in this paper, as is explained in \cite{Gauduchon_2017}. Indeed, in this reference the authors show that the existence of a Killing 2-form in 4D is directly related to the metric being ambi-K\"ahler. All in all, the described here geometry is known to be directly related to the separability of the Kerr wave equation. It thus appears likely that there exists a formalism for gravitational perturbation theory in Kerr that is based on the Plebanski formalism, and the geometry described in this paper. This could be distinct from that given by the Teukolsky equation, and potentially more useful. The basic point is that the Teukolsky equation reduces the problem of describing gravitational perturbations of Kerr to a single scalar. This scalar is an appropriate potential for the metric. It is possible that the geometry we described can lead to a different choice of a potential for the metric. In particular, the pure connection formalism for 4D GR, see \cite{Krasnov:2011pp} may be useful here, in that a potential for the connection rather than the metric may give a more powerful description. We hope to return to such questions in a future publication.

\appendix
\section{Alternative Cases Solution}\label{Apdx:Alternative_Cases_Solution}
Here we briefly explore the Einstein metrics that are solutions to the other cases in \cref{eq:HR_solutions_cases}.

\subsection{Case 2}
In case 2 we can make a slight change of coordinates $r = x+\varepsilon, q = y+\varepsilon$ such that 
\begin{align}
    H = \sqrt{(r-\varepsilon)(q-\varepsilon)}, \quad A = \sqrt{r^2 - \varepsilon^2}, \quad B = \sqrt{q^2 - \varepsilon^2}.
\end{align}
We redefine the functions in the metric $C = F A$, $D = G B$ such that the metric becomes 
\begin{align}
    g = \frac{1}{(r-\varepsilon)(q-\varepsilon)}\left( C\frac{(d\tau-(q-\varepsilon)d\phi)^2}{(r^2-\varepsilon^2)(r-q)} + \frac{r-q}{C} dr^2  + \frac{r-q}{D} dq^2 + D\frac{(d\tau-(r-\varepsilon)d\phi)^2}{(r-q)(q^2-\varepsilon^2)} \right)
\end{align}
The functions $C,D$ are then quartic polynomials, in which case the Einstein's equations are satisfied when 
\begin{align}
    C(r) &= -\varepsilon(4\Lambda + m \varepsilon^2 + n \varepsilon^3) + m \varepsilon^2 r + \frac{2n \varepsilon^3 + m \varepsilon^2 + 4\Lambda}{\varepsilon} r^2 - m r^3 - n r^4 \\
    D(q) &= -C(q)
\end{align}
where $m,n,\Lambda,\varepsilon$ are 4 parameters.
This gives another class of ambitoric Einstein metrics.

\subsection{Case 3}
For case 3 we do not need a change of coordinates and can remain in $x,y$. 
Similarly to the main text we require the trace of the Ricci tensor to be equal to the cosmological constant,
this becomes 
\begin{align}
    (1+\varepsilon (x+y))^2 (F_{xx} + G_{yy}) - 6 \varepsilon (1+\varepsilon (x+y))(F_x + G_y) + 12 \varepsilon^2 (F+G) = 24 (y-x) \Lambda \label{eq:Case2_Trace_Weyl_Condition}
\end{align}
taking derivatives w.r.t $x$ 3 times and alternatively with $y$ 3 times we obtain the equations 
\begin{align}
    F_{xxxxx} = 0, \quad G_{yyyyy} = 0
\end{align}
which have solutions such that $F,G$ are quartic polynomials in their respective variables.
Substituting this back into \cref{eq:Case2_Trace_Weyl_Condition} and solving for the constants reveals that 
\begin{align}
    F(x) &= b + m x + \gamma x^2 + (2\gamma \varepsilon - 2m \varepsilon^2 - 4 \Lambda) x^3 + \frac{\varepsilon(2\gamma \varepsilon - 2m \varepsilon^2 - 4\Lambda)}{2} x^4\\
    G(y) &= -F(y).
\end{align}
where $b,m,\gamma,\varepsilon,\Lambda$ are arbitrary constants.
This solution for $F,G$ automatically satisfies $R_{\mu\nu} = \frac{\Lambda}{4} g_{\mu \nu}$.
The Einstein metric is of the form
\begin{align}
    g = \frac{1}{(1+\varepsilon (x+y))^2} \left[ \frac{F}{x-y}(d\tau-y d\phi)^2 + \frac{x-y}{F} dx^2 + \frac{x-y}{G} dy^2 + \frac{G}{x-y} (d\tau- x d\phi)^2 \right].
\end{align}
\end{document}